\def\wid{\check{{\cc@style\underline{\mskip9.5mu}}}}
\def\Wideubar{\underaccent{{\cc@style\underline{\mskip6mu}}}}
\def\wideubar{\underaccent{{\cc@style\underline{\mskip9.5mu}}}}
\def\Wideubar{\underaccent{{\cc@style\underline{\mskip6mu}}}}
\def\widebar{\accentset{{\cc@style\underline{\mskip9.5mu}}}}
\def\Widebar{\accentset{{\cc@style\underline{\mskip6mu}}}}
\newtheorem{lemma}{Lemma}
\newtheorem{theorem}{Theorem}
\newtheorem{assumption}{Assumption}
\theoremstyle{remark}\newtheorem{remark}{Remark}
\title{\LARGE \bf
	Resilient Control under Quantization and Denial-of-Service: Co-designing a Deadbeat Controller and Transmission Protocol 
}
\author{Wenjie~Liu,~Jian~Sun,~\IEEEmembership{Senior Member,~IEEE}, Gang Wang,~\IEEEmembership{Member,~IEEE},\\Francesco Bullo,~\IEEEmembership{Fellow,~IEEE},~and
	Jie Chen,~\IEEEmembership{Fellow,~IEEE}
	\thanks{This work was supported in part by the National Natural Science Foundation of China under Grants 61925303, 62088101, U20B2073, 61720106011, and the National Key R$\&$D Program of China under Grant 2018YFB1700100. 
	}
	\thanks{W. Liu and G. Wang are with the State Key Lab of Intelligent Control and Decision of Complex Systems and the School of Automation, Beijing Institute of Technology, Beijing 100081, China (e-mail: liuwenjie@bit.edu.cn; gangwang@bit.edu.cn).
		
		J. Sun is with the State Key Lab of Intelligent Control and Decision of Complex Systems and the School of Automation, Beijing Institute of Technology, Beijing 100081, China, and the Beijing Institute of Technology Chongqing Innovation Center, Chongqing 401120, China (e-mail: sunjian@bit.edu.cn).
		
		J. Chen is with the Department of Control Science and Engineering, Tongji University, Shanghai 201804, China, and also with the State Key Lab of Intelligent Control and Decision of Complex Systems and the School of Automation, Beijing Institute of Technology, Beijing 100081, China 	
		(e-mail: chenjie@bit.edu.cn).
		
		F. Bullo is with the Mechanical Engineering Department and the Center of Control, Dynamical Systems and Computation, UC Santa Barbara, CA 93106-5070, USA (e-mail: bullo@ucsb.edu).
	}
}
\begin{document}
	\maketitle

	\begin{abstract}
		This paper is concerned with the problem of stabilizing continuous-time linear time-invariant systems subject to quantization and
		Denial-of-Service (DoS) attacks.
		In this context, two DoS-induced challenges emerge with the design of resilient encoding schemes, namely, the coupling between encoding strategies of different signals, and the synchronization between the encoder and decoder.
		To address these challenges, a novel structure that is equipped with a deadbeat controller as well as a delicate transmission protocol for the input and output channels, co-designed leveraging the controllability index, is put forward.
		When both input and output channels are subject to DoS attacks and quantization, the proposed structure is shown able to decouple the encoding schemes for input, output, and estimated output signals.
		This property is further corroborated by designing encoding schemes as well as conditions that ensure exponential stability of the closed-loop system.
		On the other hand, when only the output channel is subject to network phenomenon, the proposed structure can achieve exponential stabilization without acknowledgment (ACK) signals, in contrast to existing ACK-based results.  
		Finally, a numerical example is given to demonstrate the practical merits of the proposed approach as well as the theory.
	\end{abstract}
	\begin{keywords} Denial-of-Service attacks, quantization, deadbeat control, acknowledgment-free protocol.
	\end{keywords}
	
	\section{Introduction}~\label{sec:intro}
	Driven by recent advances in computing, communication, and networking technologies, modern engineering systems (e.g., \cite{tac2020wwsc,2014Aunmanned,Lv2020Event}) have gradually shifted their computing and control workload to the cloud, and even edge with data transmitted over wired or wireless networks.
	Despite their flexibility, such network-based control systems (a.k.a., networked control systems) are known vulnerable to cyber threats~\cite{2013Attack,Alvaro2009Research}.
	In fact, existing works have shown that malicious attacks can severely disrupt the
	control performance and even render the system unstable \cite{Jang2014survey}.
	Examples of such failures in widely used safety- and security-critical control systems nowadays could put our lives and even national infrastructure at risk \cite{2011Stuxnet}.

	Several types of cyberattacks have been studied, including replay attacks \cite{Zhu2017replay,replay2018}, false-data injection attacks \cite{FP-RC-FB:10p ,wu2019Switching}, and Denial-of-Service (DoS) attacks \cite{shi2015jamming,Cetinkaya2019overview,hu2020dos}.~Relative to the others, DoS attacks can cause jamming in communication channels with little knowledge of system dynamics.~They are easy to launch and have received considerable attention \cite{L2010Protection}.~For instance, the work \cite{PersisInput} developed a general DoS framework, under which closed-loop system stability can be preserved via state-feedback control, provided certain DoS attack frequency and duration conditions are met.~This result has been extended in several directions, e.g., via output-feedback control in \cite{FengResilient}, as well as considering multiple output channels in \cite{LuInput}.
	
	All the aforementioned works assumed that the communication channels have an infinite data-rate.
	Clearly for real-world engineering systems, this condition is difficult to be met.
	Systems with digital communication channels offer a basic paradigm.
	The problem of limited bandwidth have been studied by accounting for the effect of quantization.
	There is a great deal of research indicating that even without attacks, quantization can compromise system performance \cite{bullo1576851}, which is often addressed by designing suitable encoding schemes and providing enough quantization levels.
	To name a few, for stabilization of systems with quantized measurements, \cite{Liberzon2000Quantized} first introduced the so-called ``zooming-in'' and ``zooming-out'' method.
	Following this work, a number of stabilization encoding schemes have been designed for systems with quantized output feedback in \cite{Sharon2008Input, WakaikiObserver}, and switched systems in \cite{WakaikiStability, Liberzon2014Finite, YangFeedback, Wakaiki2017Stabilization}.
	Recently, a few works have considered these two factors (i.e., quantization and DoS attacks) simultaneously; see \cite{chen2018Event,
		feng2020datarate,liu2020datarate, Feng2020multi, 8880482}.
	The trade-off between system resilience against DoS attacks and data-rate was analyzed in \cite{feng2020datarate}.
	The minimum data-rate for stabilizing a centralized system and a multi-agent system were derived in \cite{liu2020datarate} and \cite{Feng2020multi}, respectively.
	Capitalizing on the zooming-in and -out method, the work \cite{8880482} designed a resilient output encoding scheme for systems whose output channel is subject to DoS attacks and limited data-rate.
	
	The goal of this paper is to stabilize systems with both input (controller-to-plant) and output (plant-to-controller) channels subject to DoS attacks and limited bandwidth.
	To this aim, the quantizer encoding schemes should be carefully designed.
	In the absence of DoS attacks, the work \cite{WakaikiObserver} developed encoding schemes for signals transmitted through both input and output channels.
	However, their schemes cannot be applied here, due to the coupling between encoding strategies for different signals in the presence of DoS attacks.	To overcome this challenge, we put forth a delicate structure, including a deadbeat controller and a transmission protocol. 
	Our protocol requires signals transmitted through the input channel at a higher rate than those through the output channel. Precisely, their transmission rate ratio is exactly the controllability index of the system.
	Its efficacy is corroborated by the possibility to decouple design of different encoding schemes as, well as, establishing closed-loop stability conditions.
	We further apply this structure to stabilize systems with only output channel has network imperfections.
	In this scenario, it is proved that the proposed structure can secure the synchronization between encoder and decoder even without acknowledgments (ACKs), which are required by existing works, e.g., \cite{8880482,feng2020datarate}.
	
	In a nutshell, the main contributions of the present work are summarized as follows.
	\begin{itemize}
		\item[\textbf{c1)}]
		To cope with the coupling and synchronization issues, a structure consisting of a deadbeat controller and a transmission protocol for input and output channels, co-designed in terms of the controllability index, is advocated.
		\item[\textbf{c2)}]
		Under this structure, the input, output, and estimated output encoding schemes can be designed separately to achieve closed-loop stability when both input and output channels are subject to DoS attacks and quantization;
		and,
		\item[\textbf{c3)}]
		When such network phenomena appear only in the output channel, an encoding scheme is designed such that the system can be stabilized through an ACK-free protocol, that is in sharp contrast to existing ACK-based results. 
	\end{itemize}
	\emph{Notation:}  Denote the set of integers (real numbers) by $\mathbb{Z}$ ($\mathbb{R}$). 
	Given $\alpha \in \mathbb{R}$ or $\alpha \in \mathbb{Z}$, let $\mathbb{R}_{>\alpha}$ ($\mathbb{R}_{\ge \alpha}$) or $\mathbb{Z}_{>\alpha}$ ($\mathbb{Z}_{\ge \alpha}$) denote the set of real numbers or integers greater than (greater than or equal to) $\alpha$.
	Let $\mathbb{N}$ denote the set of natural numbers and $\mathbb{N}_0 := \mathbb{N} \cup \{0\}$.
	For a vector $v = [v_1, v_2, \cdots\!, v_n]^T \in \mathbb{R}^n$, denote its maximum norm by $|v| := \max\{|v_1|, \cdots\!, |v_n|\}$ and the corresponding induced norm of a matrix $M \in \mathbb{R}^{m \times n}$ by $\Vert M\Vert := \sup\{|Mv|:v \in \mathbb{R^n}, |v| = 1\}$.
	\section{Preliminaries and Problem Formulation} \label{problem_Formulation}
	\subsection{Problem formulation}\label{systemdefination}
	In this paper, we study the networked control architecture in Fig. \ref{networkfig}, where a plant is to be stabilized by a remote digital controller over a network subject to DoS attacks.~The plant is described by the following dynamics
	\begin{subequations}\label{continuoussystem}
		\begin{align}
		&\dot{x}(t) = Ax(t) + Bu(t) \label{continuoussysteG_1}\\
		&y(t) = Cx(t) \label{continuoussystem_2}
		\end{align}
	\end{subequations}
	where $x(t) \in \mathbb{R}^{n_x}, u(t) \in \mathbb{R}^{n_u}$, and $y(t) \in \mathbb{R}^{n_y}$ are the state, the control input, and the output, respectively.
	Here, we consider output signals and control inputs to be transmitted through different channels over a shared network, which are accordingly referred to as output channel and input channel.
	Specifically, data transmissions over the output channel occur periodically with interval $\Delta > 0$.
	That is, the output encoder samples $y(t)$ and sends its quantized version to the controller every $\Delta$ time.
	Likewise, the digital controller generates control signals and transmits their quantized values to the plant periodically with interval $\delta>0$.
	At the plant side, the quantized control inputs are first decoded, then pass through a zero-order hold (ZOH) before entering the plant.~To maintain the synchronization between input and output transmissions, we choose $\delta = \Delta/b $ for some $b \in \mathbb{N}$.
	For future reference, let
	\begin{equation*}
	x_{q,k} := x(q\Delta + k\delta),\qquad y_{q,k} := y(q\Delta + k\delta)
	\end{equation*}
	for every $q \in \mathbb{Z}_{\ge 0}$, and $k = 0, \cdots\!, \frac{\Delta}{b}$, and
	\begin{equation}\label{eq:adbd}
	A_d := e^{A\delta}, \qquad B_d := \int_{0}^{\delta}{e^{As}B\, ds}.
	\end{equation}
	Moreover, we use $x_q$ to denote $x_{q,0}$ for simplicity.
	
	\begin{figure}[t]
		\centering
		\includegraphics[width=8cm]{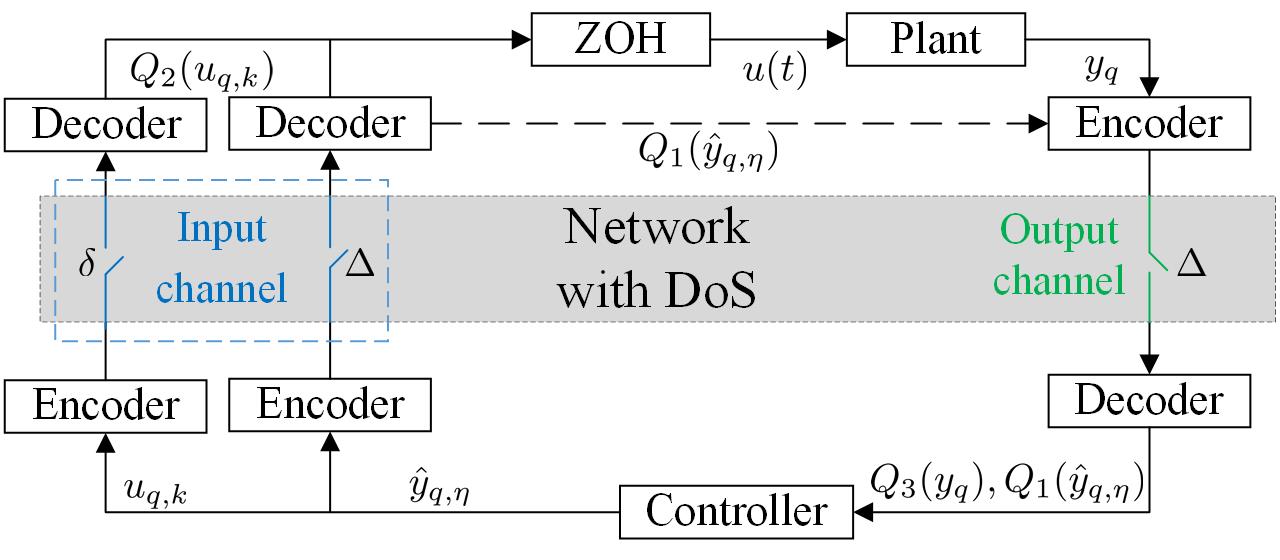}\\
		\caption{Networked control system with both input (the blue line) and output (the green line) channels subject to DoS attacks.}\label{networkfig}
		\centering
	\end{figure}

We make the following assumptions on system \eqref{continuoussystem}.
\begin{assumption}[Controllability and observability]\label{as:abca}
	The pair $(A,B)$ is controllable, and the pair $(C,A)$ is observable.
\end{assumption}
\begin{assumption}[Initial state bound]\label{x0bound}
	An upper bound on the initial state $|x_0|$ is known.
\end{assumption}
\begin{remark}
	Thanks to As. \ref{as:abca}, it has been shown in \cite{Kreisselmeier1999On} that if $\delta$ is non-pathological, then $(A_d, B_d)$ in \eqref{eq:adbd} is controllable.
	Let $\eta$ denote its controllability index, which can be computed by evaluating
	${\rm rank} [B_d, \cdots\!, A_d^{\eta} B_d] = n_x$.
	Similarly, $(C, A_d^{\eta})$ is observable.
	An upper bound on the initial state in As. \ref{x0bound} can be derived via the zooming-out method; see \cite[Sec. 4]{8880482}.
\end{remark}

	\subsection{Denial-of-Service attack}
	In Fig. \ref{networkfig}, since both input and output signals are transmitted periodically, we adopt the discrete-time DoS attack model in \cite{8880482}.
	Under this model, attacks are launched only at output transmission instants, and each lasts for an output transmission period $\Delta$.
	This model is general enough since it only poses requirements on the frequency and duration of DoS attacks.
	Here, DoS frequency is the number of DoS \emph{off/on} switches over a fixed time interval, while
	DoS duration represents the total number of attacks.
	\begin{assumption}[DoS frequency]\label{DoS_frequencyassumption}
		There exist constants $\kappa_f \in \mathbb{R}_{\ge 0}$ and $\nu_f \in \mathbb{R}_{\ge 2}$ such that DoS frequency satisfies
		\begin{equation}\label{dosfre}
		\Phi_f(q) \le \kappa_f + \frac{q}{\nu_f}
		\end{equation}
		over time interval $[0, q\Delta)$, where $q \in \mathbb{Z}_{\ge 0}$.
	\end{assumption}
	
	\begin{assumption}[DoS duration]\label{DoS_durationassumption}
		There exist constants $\kappa_d \in \mathbb{R}_{\ge 0}$ and $\nu_d \in \mathbb{Z}_{\ge 1}$ such that DoS duration satisfies
		\begin{equation}\label{dosdur}
		\Phi_d(q) \le \kappa_d + \frac{q}{\nu_d}
		\end{equation}
		over time interval $[0, q\Delta)$, where $q \in \mathbb{Z}_{\ge 0}$.
	\end{assumption}
	\begin{remark}	
		Given its generality, this attack model has been widely used in e.g., \cite{8880482,Feng2020multi,FengResilient,LuInput,feng2020datarate,PersisInput}.
		As pointed out in \cite{Hespanha1999STABILITY}, $\nu_f\Delta$ in As. \ref{DoS_frequencyassumption} can be regarded as the average dwell-time between two consecutive DoS attacks \emph{off/on} switches.
		On the other hand, As. \ref{DoS_durationassumption} indicates that, the average duration of DoS attacks does not exceed a proportion $1/\nu_d$ of the time interval.
		Constants $\kappa_f$ and $\kappa_d$ are also known as chatter bounds.
		Conditions $\nu_d \ge 1$ and $\nu_f \ge 2$ suggest that DoS attacks are not strong enough to prevent all packets from being transmitted, thus rendering it possible for the system to be stabilized by suitable control strategies.
	\end{remark}
	\section{Networked Phenomena at Both Input and Output Channels}\label{inputoutputsection}
	This section aims to design resilient encoding schemes for stabilization of system (\ref{continuoussystem}) via a remote observer-based digital controller over communication channels subject to limited bandwidth and DoS attacks; see Fig. \ref{networkfig}.
	To this end, there are three signals that need to be quantized, i.e., the estimated output by observer $\hat{y}_q$, the control input $u_{q,k}$, and the plant output $y_q$, with their quantized values denote by $Q_1(\hat{y}_q)$, $Q_2(u_{q,k})$, and $Q_3(y_q)$, respectively.
	In addition, since the input and output channels share a communication network, we assume for simplicity that, once there is a DoS attack,
	neither the input nor the output signals will be received, and both of them are set to the default zero.
	In this manner, the decoder and encoder at both input and output sides can infer whether there is an attack.
	Further, their quantization ranges and centers are identical at every transmission instant.
	As a result, they can be synchronized even with an ACK-free protocol.
	\subsection{Controller architecture}
	To stabilize system (\ref{continuoussystem}), we put forth a two-stage observer-based controller by considering whether there is an attack or not.
	Specifically, in the absence of DoS attacks, both $Q_3(y_q)$ and $Q_1(\hat{y}_{q-1,\eta})$ are available at the observer side, so we construct the following controller
	\begin{subequations}\label{abdoscontroller} 
		\begin{align}
		&\hat{x}_{q, k+1} \!=\! A_d\hat{x}_{q,k} \!+\! B_du_{q,k},\!\!  &k &\le \eta - 1 \label{abdoscontroller_1}\\
		&\hat{x}_{q}\!=\! \hat{x}_{q-1,k} \!+\! M\big[Q_3(y_{q}) \!-\! Q_1(\hat{y}_{q-1,k})\big],\!\!  &k &= \eta \label{abdoscontroller_2}\\
		&\hat{y}_{q,k} \!=\! C \hat{x}_{q,k} \label{abdoscontroller_3}\\
		&u_{q,k} \!=\! K \hat{x}_{q,k} \label{abdoscontroller_4}
		\end{align}
	\end{subequations}
	where the initial condition $\hat{x}_0$ is given by $\hat{x}_0 = 0$, and $\delta$ is chosen such that
	\begin{equation}\label{delta}
	\delta = \frac{\Delta}{\eta}.
	\end{equation}
	Matrix $M \in \mathbb{R}^{n_x \times n_y}$ can be regarded as an observer gain such that $R := A_d^{\eta}(I - MC)$ is schur stable, which always exists since $(C, A_d^{\eta})$ is observable.
	Moreover, $(A_d, B_d)$ is controllable, thus a controller gain matrix $K \in \mathbb{R}^{n_u \times n_x}$ can be designed such that
	\begin{equation}\label{dbdb}
	\bar{R}^{\eta} = (A_d + B_dK)^{\eta} = 0.
	\end{equation}
	\begin{remark}\label{remark:dbgain}
		Matrix $K$ obeying (\ref{dbdb}) is also known as a class of deadbeat controller gain, since it assigns all the eigenvalues of $A_d + B_d K$ to the origin.
		Solution of this eigenstructure assignment problem is non-unique, and can be obtained through several approaches, e.g., \cite{FahmyDead}.
	\end{remark}

	On the other hand, when there is a DoS attack, none of $Q_1(\hat{y}_q)$, $Q_2(u_{q,k})$, or $Q_3(y_q)$ can be received, thus we simply employ an open-loop controller as follows
	\begin{subequations}\label{predoscontroller}
		\begin{align}
		&\hat{x}_{q,k+1}  = A_d \hat{x}_{q,k}\\
		&\hat{y}_{q,k}  = C \hat{x}_{q,k}\\
		&u_{q,k}  = 0
		\end{align}
	\end{subequations}
	with the initial estimated state $\hat{x}_0 = 0$.
	
	In addition, to apply the discrete-time signal $Q_2(u_{q,k})$ to the continuous-time system (\ref{continuoussysteG_1}), a ZOH is used, and the control input is given by
	\begin{equation*}
	u(t) = Q_2(u_{q,k}), \qquad q\Delta + k\delta \le t < q\Delta + (k+1)\delta
	\end{equation*}
	where $k = 0, \cdots\!, \eta - 1$.
	\subsection{Quantizer}
	We first design quantizers at the input channel.
	According to (\ref{abdoscontroller_1}) and (\ref{abdoscontroller_2}), $u_{q,k}$ is needed for feedback control, whereas $\hat{y}_{q,k}$, resetting the estimated state, is required at each successful transmission instants.
	Therefore, the controller sends $u_{q,k}$ and $\hat{y}_{q,\eta}$ to the quantizers periodically at a different rate.
	In more precise terms, periods for the former and the latter are $\delta$, and $\eta \delta$, respectively.
	Let $E_{1,q} \ge 0$ and $E_{2,q,k} \ge 0$ satisfy
	\begin{equation}\label{2E12inequality}
	|\hat{y}_{q - 1,\eta}| \le E_{1,q},\ \ \ |u_{q,k}| \le E_{2,q,k}.
	\end{equation}
	Suppose there are $N_1$ ($N_2$) levels for quantization of $\hat{y}_{q,\eta}$ ($u_{q,k}$).
	Partition the hypercubes at the encoders
	\begin{equation*}
	\begin{split}
	\{\hat{y} \in \mathbb{R}^{n_y}: |\hat{y}_{q - 1,\eta}| \le E_{1,q}\},~~
	\{u \in \mathbb{R}^{n_u}: |u_{q,k}| \le E_{2,q,k}\}
	\end{split}
	\end{equation*}
	into $N_1^{n_y}$, and $N_2^{n_u}$ equal-sized boxes, respectively.
	In addition, each box is represented by a value in $\{1, \cdots\!, N_1^{n_y}\}$, or $\{1, \cdots\!, N_2^{n_u}\}$ following a bijection mapping.
	Indices that denote the partitioned boxes containing $\hat{y}_{q, \eta}$ and $u_{q, k}$ are then sent to the decoders.
	If $\hat{y}_{q, \eta}$ and $u_{q, k}$ are on the boundary of several boxes, then anyone of them can be chosen.
	At the decoders side, $Q_1(\hat{y}_{q, \eta})$ and $Q_2(u_{q, k})$ are recovered from the indices.
	This implies that the encoder and its corresponding  decoder should share the same quantization ranges and centers.
	Since DoS attacks block both input and output signals from transmitting, encoders and decoders at both sides of input and output channels are naturally synchronized.
	The quantization errors
	 of the aforementioned encoding schemes obey
	\begin{equation}\label{2E1inequality}
	|\hat{y}_{q - 1,\eta} - Q_1(\hat{y}_{q - 1,\eta})| \le \frac{E_{1,q}}{N_1},
	\end{equation}
	\begin{equation}\label{2E2inequality}
	|u_{q,k} - Q_2(u_{q,k})| \le \frac{E_{2,q,k}}{N_2}.
	\end{equation}
	Since $\hat{x}_0 = 0$, we deduce that $\hat{y}_0 = u_0 = 0$.
	Therefore, the initial bounds $E_{1,0}$ and $E_{2,0,0}$ can be set by
	\begin{equation*}
	E_{1,0} = 0,\ \ \ E_{2,0,0} = 0.
	\end{equation*}
	
	Moreover, as for the output $y_q$, choose $E_{3,q} \ge 0$ such that
	\begin{equation}\label{2E3inequality}
	|y_q - Q_1(\hat{y}_{q - 1,\eta})| \le E_{3,q}.
	\end{equation}
	Let $N_3$ be the quantization level of  $y_q$.
	The hypercube
	\begin{equation}\label{eq:hypercubeQ3}
	\{y \in \mathbb{R}^{n_y} : |y_q - Q_1(\hat{y}_{q - 1,\eta})| \le E_{3, q}\}.
	\end{equation}
	is partitioned into $N_3^{n_y}$ equal-sized boxes with the center $Q_1(\hat{y}_{q - 1, \eta})$.
	Then, following the same procedure as the above two quantizers, $Q_3(y_q)$ is transmitted to the controller every $\Delta$ time.
	The quantization error satisfies
	\begin{equation*}
	|y_q - Q_3(y_q)| \le \frac{E_{3, q}}{N_3}.
	\end{equation*}
	Define error of the system $e_{q, k} := x_{q,k} - \hat{x}_{q,k}$.
	Combining $\hat{x}_0 = 0$ with As. \ref{x0bound}, we deduce that the initial error obeys $|e_0| = |x_0|$. 
	Thus it suffices to set $E_{3,0} := \Vert C\Vert |x_0|$.
	\subsection{Stability analysis}\label{Main_results}
	In this subsection, we start by presenting encoding schemes $\{E_{p,q,k}\} (p = 1, 2, 3)$, followed by formal stability conditions.
	Design $\{E_{1,q}: q \in \mathbb{Z}_{\ge 1}\}$ such that
	\begin{equation}\label{2E1q}
	E_{1,q} = E_{1,0},\qquad \forall q \in \mathbb{Z}_{\ge 1}
	\end{equation}
	and let $\{E_{2,q,k}: q \in \mathbb{Z}_{\ge 1}, k = 0, \cdots\!, \eta - 1\}$ be updated by
	\begin{equation}\label{2E2qk}
	E_{2,q,k} = \frac{N_3 - 1}{N_3}\left\Vert K\bar{R}^kM\right\Vert E_{3,q}, \qquad {\rm{~if~}}q\Delta = s_r.
	\end{equation}
	Moreover, the sequence $\{E_{3,q}: q \in \mathbb{Z}_{\ge 1}\}$ is given by
	\begin{equation}\label{2E3q}
	E_{3,q + 1} :=
	\left\{
	\begin{aligned}
	& \hat{\theta}_a E_{3,q}, & q\Delta \ne s_r\\
	& \hat{\theta}_{0} E_{3,q}, & (q-1)\Delta \ne s_r, q\Delta = s_r\\
	& \hat{\theta}_{na}E_{3,q}, & (q-1)\Delta = s_r, q\Delta = s_r
	\end{aligned}
	\right.
	\end{equation}
	where 
	\begin{align*}
	\hat{\theta}_a & := \Vert A_d^{\eta}\Vert\\
	\hat{\theta}_{0} & := a_0\rho + \frac{\Vert C\Vert a_1}{N_3} + \Vert C\Vert a_2\frac{N_3 - 1}{N_2N_3}\\
	\hat{\theta}_{na} & := \rho + \frac{\Vert C\Vert a_1}{N_3} + \Vert C\Vert a_2\frac{N_3 - 1}{N_2N_3}
	\end{align*}
	with positive constants $a_0, a_1, a_2$, and $0 < \rho < 1$ validating the following for all $\ell \ge 1$
	\begin{equation}\label{2rho}
	\begin{aligned}
	\big\Vert R^{\ell}\big\Vert \le a_0\rho^{\ell},~~~~ \big\Vert R^{\ell}A_d^{\eta}M\big\Vert \le a_1\rho^{\ell}\\
	\sum_{i = 0}^{\eta - 1}{\big\Vert R^{\ell}A_d^{\eta - i - 1}B_d \big\Vert\big\Vert K\bar{R}^iM\big\Vert}\le a_2\rho^{\ell}.\\
	\end{aligned}
	\end{equation}
	Since $R$ is schur stable, there always exist such constants. 
	
	Next, we show that our designed schemes above are resilient to DoS attacks, which is one of our main results too.
	\begin{theorem}\label{2convergetheorem}
		Consider system (\ref{continuoussystem}) with the observer-based controller in (\ref{abdoscontroller}) and (\ref{predoscontroller}), with $K$ obeying (\ref{dbdb}) and $M$ chosen such that $R$ is schur stable.
		Let As. \ref{as:abca}--\ref{DoS_durationassumption} hold.
		If i) the input and output transmission periods adhere to (\ref{delta}), ii) the number of quantization levels $N_1 \ge 1$ is odd, 
		\begin{equation}\label{2Ncondition}
		\begin{split}
		N_2\! >\! \max\!{\left\{\frac{a_2\Vert C\Vert}{1 - \rho},\, \frac{a_2}{a_1}\right\}},\,{\rm{~and~}}
		N_3\! > \!\frac{\Vert C\Vert a_1 - \frac{\Vert C\Vert a_2}{N_2}}{1 - \rho  -\frac{\Vert C\Vert a_2}{N_2}}
		\end{split}
		\end{equation}
		and, iii) DoS attacks satisfy
		\begin{equation}\label{2doscondition}
		\frac{1}{\nu_d} \le \frac{\log{(1/\hat{\theta}_{na})}}{\log{(\hat{\theta}_a/\hat{\theta}_{na})}} - \frac{\log{(\hat{\theta}_0/\hat{\theta}_{na})}}{\log{(\hat{\theta}_a/\hat{\theta}_{na})}}\frac{1}{\nu_f}
		\end{equation}
		then the system is exponentially stable under the encoding scheme with error bounds $\{E_{p,q,k}:q \in \mathbb{Z}_{\ge 1}, k = 0, \cdots\!, \eta - 1\} (p = 1, 2, 3)$ constructed by the update rule in (\ref{2E1q})-(\ref{2E3q}).
	\end{theorem}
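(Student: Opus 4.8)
The plan is to work at the output‑sampling instants $q\Delta$, reduce the closed loop there to a switched linear recursion driven only by the quantization errors, certify that the prescribed radii $\{E_{p,q,k}\}$ are never exceeded (so that no quantizer overflows), unroll the resulting scalar recursion for $E_{3,q}$ against the DoS budgets, and finally lift the conclusion back to continuous time. Concretely, I would first record two structural facts. Since $\hat x_0=0$ and the plant input is forced to zero during attacks, a short induction shows that $\hat x_{q-1,\eta}=\bar R^{\eta}\hat x_{q-1}=0$ after any successful period, by the deadbeat identity \eqref{dbdb}, and that $\hat x_{q-1,\eta}=0$ persists through any run of attacked periods and through the initial period; hence $\hat y_{q-1,\eta}=0$ for every $q\ge 1$, the assignment $E_{1,q}=E_{1,0}=0$ in \eqref{2E1q} is consistent, and the hypercube \eqref{eq:hypercubeQ3} is centred at the origin ($N_1$ odd so that $Q_1$ fixes the origin). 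Second, $B_dK=\bar R-A_d$ together with \eqref{dbdb} yields the telescoping identity $\sum_{i=0}^{\eta-1}A_d^{\eta-1-i}B_dK\bar R^{i}=\bar R^{\eta}-A_d^{\eta}=-A_d^{\eta}$. Substituting the two‑mode controller \eqref{abdoscontroller}--\eqref{predoscontroller} and the rate choice \eqref{delta} into \eqref{continuoussystem}, and writing $e_q:=x_q-\hat x_q$, I expect to arrive at
\begin{equation*}
x_{q+1}=R\,x_q+\tilde w_q\ \ (q\Delta=s_r),\qquad x_{q+1}=A_d^{\eta}x_q\ \ (q\Delta\neq s_r),
\end{equation*}
where in the successful case $e_q=(I-MC)x_q-M(Q_3(y_q)-y_q)$ and $\tilde w_q$ is a fixed linear combination of $Q_3(y_q)-y_q$ and $\{Q_2(u_{q,i})-u_{q,i}\}_{i}$ of magnitude $O(E_{3,q}/N_3)+O(E_{3,q}/N_2)$, while in the attacked case $e_{q+1}=A_d^{\eta}e_q$.

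The heart of the proof is then the inductive claim $\|C\|\,|x_q|\le E_{3,q}$ for all $q$, carried over maximal blocks of consecutive successful / attacked periods. Granting it, $Q_1(\hat y_{q-1,\eta})=0$ yields the range condition \eqref{2E3inequality}, and $|Q_3(y_q)|\le\frac{N_3-1}{N_3}E_{3,q}$ then makes $|u_{q,i}|=|K\bar R^{i}M\,Q_3(y_q)|\le E_{2,q,i}$ hold by the very definition \eqref{2E2qk}; this is exactly the decoupling the proposed structure is meant to deliver, namely $E_1$ and $E_2$ are pinned down by $E_3$ alone. On an attacked block one iterates $x_{q+1}=A_d^{\eta}x_q$ and reads off the factor $\hat\theta_a=\|A_d^{\eta}\|$. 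On a successful block $[q_0,q_0+L]$ one \emph{cannot} argue one sampling step at a time, since $\|R\|$ may exceed $1$; instead I would unroll $x_{q_0+j}=R^{j}x_{q_0}+\sum_{\ell=0}^{j-1}R^{\ell}\tilde w_{q_0+j-1-\ell}$ and estimate it using $\|R^{\ell}\|\le a_0\rho^{\ell}$, $\|R^{\ell}A_d^{\eta}M\|\le a_1\rho^{\ell}$, the summation bound in \eqref{2rho}, and $|Q_2(u_{q,i})-u_{q,i}|\le\frac{N_3-1}{N_2N_3}\|K\bar R^{i}M\|E_{3,q}$ from \eqref{2E2qk}; the geometric sums should collapse (the transient $a_0\rho^{j}$ terms cancelling) to $\hat\theta_0E_{3,q_0}$ at $j=1$ and to $\hat\theta_{na}E_{3,q_0+j-1}$ for $j\ge 2$, i.e.\ exactly to the recursion \eqref{2E3q}, which closes the induction. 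The bounds \eqref{2Ncondition} are precisely what makes $\hat\theta_{na}<1$ (a contraction in the absence of attacks) and keeps the quantization corrections inside $\hat\theta_0,\hat\theta_{na}$ below $1-\rho$.

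With the invariant in hand, $E_{3,q}$ satisfies \eqref{2E3q} with per‑step factors $\hat\theta_a$, $\hat\theta_0$ (first successful step after an attack) and $\hat\theta_{na}$ (any later successful step), so $E_{3,q}=\hat\theta_{na}^{\,q-\Phi_d(q)-m(q)}\hat\theta_0^{\,m(q)}\hat\theta_a^{\,\Phi_d(q)}E_{3,0}$ modulo edge effects, where $\Phi_d(q)$ is the number of attacked periods and $m(q)$ the number of attack intervals over $[0,q\Delta)$. Using As.~\ref{DoS_frequencyassumption}--\ref{DoS_durationassumption} to bound $\Phi_d(q)\le\kappa_d+q/\nu_d$ and $m(q)\le\kappa_f+q/\nu_f$ and taking logarithms, the coefficient of $q$ is $\log\hat\theta_{na}+\tfrac1{\nu_d}\log(\hat\theta_a/\hat\theta_{na})+\tfrac1{\nu_f}\log(\hat\theta_0/\hat\theta_{na})$, which \eqref{2doscondition} forces to be negative; hence $E_{3,q}\le c_0\lambda^{q}$ for some $\lambda\in(0,1)$ and $|x_q|\le(c_0/\|C\|)\lambda^{q}$. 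Finally, on each interval $[q\Delta,(q+1)\Delta)$ the applied input is piecewise constant of magnitude $O(E_{3,q})$, so $|x(t)|\le\|e^{A(t-q\Delta)}\|\,|x_q|+O(E_{3,q})=O(\lambda^{t/\Delta})$, which gives $|x(t)|\le c\,e^{-\gamma t}$, the claimed exponential stability.

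I expect the non‑overflow induction to be the real obstacle: because $R$ is merely Schur and not a contraction in the max‑norm, the certificate $\|C\|\,|x_q|\le E_{3,q}$ cannot be propagated one sampling step at a time but must be transported across an entire attack‑free interval through $\|R^{\ell}\|\le a_0\rho^{\ell}$, which is exactly why an extra factor $\hat\theta_0=\hat\theta_{na}+(a_0-1)\rho$ is paid once per attack interval and, ultimately, why the DoS \emph{frequency}, not merely its duration, has to be constrained as in \eqref{2doscondition}. Making the three cases of \eqref{2E3q} line up with the unrolled estimates term by term, and carrying the chatter bounds $\kappa_f,\kappa_d$ correctly through the logarithmic inequality, are the remaining delicate but routine parts.
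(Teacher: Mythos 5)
Your proposal is correct and follows essentially the same route as the paper: the deadbeat identity forces $\hat{x}_{q-1,\eta}=0$ (hence $\hat{y}_{q-1,\eta}=0$, an origin-centered $Q_3$, and $E_{2,q,k}$ pinned down by $E_{3,q}$ alone), the Schur-stable recursion in $R$ is unrolled over entire attack-free blocks using the constants in \eqref{2rho} to recover $\hat{\theta}_0$ and $\hat{\theta}_{na}$, the DoS frequency/duration budgets give geometric decay of $E_{3,q}$, and the bound is lifted to $|x(t)|$ via the piecewise-constant input. The only cosmetic difference is that you phrase the invariant directly as $\Vert C\Vert\,|x_q|\le E_{3,q}$ rather than through the observation error $e_{q-1,\eta}$, but these coincide precisely because $\hat{x}_{q-1,\eta}=0$, which is the paper's Lemma \ref{lem:lem1}.
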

	We begin proving Thm. \ref{2convergetheorem} by giving a lemma demonstrating that the update rules in \eqref{2E1q}-\eqref{2E3q} satisfy (\ref{2E12inequality}) and (\ref{2E3inequality}).
	\begin{lemma}\label{lem:lem1}
		Consider system (\ref{continuoussystem}) with the controller in (\ref{abdoscontroller}) and (\ref{predoscontroller}), where $K$ obeys (\ref{dbdb}).
		Let As. \ref{as:abca}--\ref{DoS_durationassumption} hold.
		If $\{E_{p,q,k}:q \in \mathbb{Z}_{\ge 1}, k = 0, \cdots\!, \eta - 1\} (p = 1, 2, 3)$ obey (\ref{2E1q})-(\ref{2E3q}), then (\ref{2E12inequality}) and (\ref{2E3inequality}) hold true for all $q \in \mathbb{Z}_{\ge 1}$.
	\end{lemma}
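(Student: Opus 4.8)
The plan is to prove the three required inequalities~\eqref{2E12inequality} and~\eqref{2E3inequality} simultaneously by induction on $q$, propagating a single invariant of the form ``$|e_{q-1,\eta}| \le$ (something controlled by $E_{3,q}$)'' together with the bounds~\eqref{2E1q}--\eqref{2E3q}. First I would set up the base case $q=1$: since $\hat{x}_0=0$ we have $\hat{y}_0 = u_0 = 0$, so $E_{1,1}=E_{1,0}=0 \ge |\hat{y}_{0,\eta}|$ after noting that $\hat{y}_{0,\eta}=C\hat{x}_{0,\eta}=C A_d^{\eta}\hat{x}_0 = 0$ (using~\eqref{abdoscontroller_1} with zero input, or~\eqref{predoscontroller}); likewise $|u_{1,k}|$ is handled by the choice of $E_{2,1,k}$ in~\eqref{2E2qk}; and $|y_1 - Q_1(\hat{y}_{0,\eta})| = |y_1| \le \Vert C\Vert\Vert A_d^{\eta}\Vert|x_0| \le E_{3,0} \cdot \Vert A_d^{\eta}\Vert = \hat\theta_a E_{3,0}$ if the first output transmission is jammed, which matches the first branch of~\eqref{2E3q}; the successful-transmission branch is checked analogously using $E_{3,0}=\Vert C\Vert|x_0|$.

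For the inductive step, assume the three inequalities hold up through index $q$ and derive them for $q+1$. The inequality~\eqref{2E1q} for $E_{1,\cdot}$ is the easy one: because~\eqref{abdoscontroller_2} resets $\hat x_q$ only at successful instants and the deadbeat gain forces $\bar R^{\eta}=0$, one shows $\hat{y}_{q,\eta} = C\bar R^{\eta}\hat x_q + (\text{terms that vanish}) $ — more carefully, iterating~\eqref{abdoscontroller_1}--\eqref{abdoscontroller_4} gives $\hat x_{q,\eta} = \bar R^{\eta}\hat x_q = 0$, hence $\hat y_{q,\eta}=0$ and $E_{1,q+1}=0$ suffices, which is exactly~\eqref{2E1q}. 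For $E_{2,q+1,k}$, I would substitute~\eqref{abdoscontroller_4} and~\eqref{abdoscontroller_2}: $u_{q,k}=K\hat x_{q,k}=K\bar R^k\hat x_q = K\bar R^k\big(\hat x_{q-1,\eta} + M[Q_3(y_q)-Q_1(\hat y_{q-1,\eta})]\big)$, and since $\hat x_{q-1,\eta}=\bar R^{\eta}\hat x_{q-1}=0$ this collapses to $u_{q,k}=K\bar R^k M[Q_3(y_q)-Q_1(\hat y_{q-1,\eta})]$; bounding the bracket by the quantization-error estimate $\frac{N_3-1}{N_3}E_{3,q}$ (a box of half-width $E_{3,q}$ quantized into $N_3$ pieces, so the quantized center sits within $\frac{N_3-1}{N_3}E_{3,q}$ of the true value relative to the decoder center) yields precisely~\eqref{2E2qk}.

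The heart of the lemma is~\eqref{2E3inequality}, i.e. $|y_{q+1}-Q_1(\hat y_{q,\eta})| \le E_{3,q+1}$, and this is where I expect the main work. I would write $y_{q+1}-Q_1(\hat y_{q,\eta}) = C(x_{q+1}-\hat x_{q,\eta}) + C\hat x_{q,\eta}-Q_1(\hat y_{q,\eta}) = C e_{q+1} + (\hat y_{q,\eta}-Q_1(\hat y_{q,\eta}))$, then propagate the plant/observer error $e_{q,k}=x_{q,k}-\hat x_{q,k}$ across one output period. Over a jammed period the controller runs open-loop~\eqref{predoscontroller} so $e$ evolves as $e_{q+1}=A_d^{\eta}e_q + (\text{plant input contribution})$; since $u\equiv 0$ on both sides during a successful-then-jammed transition one gets the clean factor $\Vert A_d^{\eta}\Vert=\hat\theta_a$. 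Over a successful period I would combine~\eqref{abdoscontroller_1}--\eqref{abdoscontroller_2}, the ZOH relation, and the definitions of $A_d,B_d$ to express $e_{q+1}$ in terms of $R\,e_{\text{(last successful)}}$ plus accumulated quantization errors from $Q_3$, $Q_1$, and $Q_2$; summing the $Q_2$-errors over $k=0,\dots,\eta-1$ produces exactly the $\sum_i\Vert R^{\ell}A_d^{\eta-i-1}B_d\Vert\Vert K\bar R^i M\Vert$ structure of~\eqref{2rho}, and invoking~\eqref{2rho} with $\ell=1$ collapses everything into the coefficients $\rho$, $a_1\Vert C\Vert/N_3$, and $a_2\Vert C\Vert\frac{N_3-1}{N_2N_3}$ that define $\hat\theta_{na}$; the first-successful-after-jamming case picks up the extra $a_0$ from $\Vert R^{\ell}\Vert\le a_0\rho^{\ell}$, giving $\hat\theta_0$. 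Matching these three cases against the three branches of~\eqref{2E3q} closes the induction. The main obstacle is bookkeeping the error recursion through a maximal burst of consecutive jammed periods followed by a successful one — one must verify that the accumulated open-loop growth $\Vert A_d^{\eta}\Vert^{m}$ over $m$ jammed steps is correctly absorbed into the $E_{3,\cdot}$ recursion so that the single-step bounds in~\eqref{2E3q} genuinely chain, and that the estimated-state reset in~\eqref{abdoscontroller_2} uses $Q_1(\hat y_{q-1,\eta})$ consistently with the decoder's reconstruction; I would handle this by indexing the most recent successful reception and expressing all quantities relative to it.
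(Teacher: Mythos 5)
Your proposal is correct and follows essentially the same route as the paper: it rests on the same key observation that the deadbeat property $\bar{R}^{\eta}=0$ forces $\hat{y}_{q,\eta}=0$ (so $E_{1,q}$ stays at zero and $Q_1(\hat{y}_{q-1,\eta})=0$), collapses $u_{q,k}$ to $K\bar{R}^kM[Q_3(y_q)-Q_1(\hat{y}_{q-1,\eta})]$ to obtain \eqref{2E2qk}, and propagates the observer error $e_{q,k}$ through \eqref{2rho} to match the three branches of \eqref{2E3q}. The only cosmetic difference is your explicit induction framing, which the paper leaves implicit.
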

	\begin{proof}
		Encoding schemes for systems with quantized inputs and outputs in the absence of DoS attacks have been discussed in \cite{WakaikiObserver}.
		However, their methods cannot be directly applied due to the DoS-induced coupling between these schemes.
		This challenge is addressed through our carefully designed controller structure in (\ref{abdoscontroller})-(\ref{dbdb}).
		According to (\ref{abdoscontroller}) and (\ref{dbdb}), 
		\begin{equation*}
		\hat{y}_{q - 1, \eta} = C\hat{x}_{q - 1, \eta} = C\bar{R}^{\eta}\hat{x}_{q - 1} = 0
		\end{equation*}
		holds true irrespective of DoS attacks, 
		which implies $E_{1,q} \ge |\hat{y}_{q - 1, \eta}| = E_{1,0}$ for all  $q \in \mathbb{Z}_{\ge 1}$, so $E_{1,q}$ remains unchanged.
		This result further indicates that $Q_1(\hat{y}_{q - 1, \eta}) = 0$.
		Hence, it follows from (\ref{eq:hypercubeQ3}) that the quantization center of $Q_3(y_q)$ is at the origin.
		On the other hand, if no DoS attacks occur within $[q_1\Delta, (q_1 + 1)\Delta)$, then
		\begin{equation}\label{2hatxqk}
		\begin{aligned}
		\hat{x}_{q_1, k} =&\ (A_d + B_dK)^{k }\hat{x}_{q_1} \\
		= &\ \bar{R}^{k}(\hat{x}_{q_1 - 1, \eta} + M[Q_3(y_{q_1}) - Q_1(\hat{y}_{q_1 - 1, \eta})])\\
		= &\  \bar{R}^k M[Q_3(y_{q_1}) - Q_1(\hat{y}_{q_1 - 1, \eta})]
		\end{aligned}
		\end{equation}
		hence $u_{q_1,k}$ can be expressed by $Q_3(y_{q_1}) - Q_1(\hat{y}_{q_1 - 1, \eta})$.
		In addition, since
		\begin{equation*}
		|Q_3(y_q) - Q_1(\hat{y}_{q-1, \eta})| \le \frac{N_3 - 1}{N_3}E_{3,q}
		\end{equation*}
		it follows that, in the absence of DoS attacks, $u_{q,k}$ satisfies
		\begin{equation}\label{2uqk}
		|u_{q,k}| \le \frac{N_3 - 1}{N_3}\big\Vert K\bar{R}^kM\big\Vert E_{3,q} =: E_{2,q,k}
		\end{equation}
		for all $q \ge 1, k = 0, \cdots\!, \eta -1$.
		When DoS attacks occur, the plant cannot receive inputs from controller.
		In other words, $E_{2,q,k}$ only depends on the latest $E_{3,q}$, thus $E_{2,q,k}$ can remain unchanged during DoS attacks.
		
		Following the definitions of $E_{1,q}$ and $E_{2,q,k}$, we are able to design sequence $\{E_{3,q}: q\in \mathbb{Z}_{\ge 1}\}$.
		First, in the absence of DoS attacks, the error just before each transmission instant, $e_{q - 1, \eta} = x_q - \hat{x}_{q - 1,\eta}$, satisfies
		\begin{align*}
		e_{q - 1, \eta} =&\ A_d^{\eta}(I - MC)e_{q - 1} - A_d^{\eta}M[Q_3(y_q)- y_q]\\
		& + \sum_{i = 0}^{\eta - 1}{A_d^{\eta - i - 1}B_d[Q_2(u_{q-1,i}) - u_{q- 1,i}]}\\
		& - A_d^{\eta}\big[\hat{y}_{q - 1, \eta} - Q_1(\hat{y}_{q - 1, \eta})\big]
		\end{align*}
		which implies that $e_{q - 1, \eta}$ generally relies on $\hat{y}_{q-1, \eta}, u_{q,k}$, and itself, thus introducing coupling in $E_{3,q}$ design.
		Here, this issue is addressed by (\ref{dbdb}).
		To see this, recalling (\ref{2uqk}), $\hat{y}_{q - 1, \eta} = 0$, and $Q_1(\hat{y}_{q - 1, \eta}) = 0$, we have that
		\begin{align}\label{2error}
		&e_{q + {\ell} - 1, \eta} =R^{\ell} e_{q - 1} + \sum_{j = 0}^{{\ell} - 1}R^jA_d^{\eta}M(Q_3(y_{q - j})- y_{q - j})\nonumber\\
		&+ \sum_{j = 0}^{{\ell} - 1}R^j\sum_{i = 0}^{\eta \!-\! 1}A_d^{\eta \!-\! i - 1}B_d \big[Q_2(u_{q+\ell \!-\!j \!-\!1,i}) \!-\! u_{q+{\ell}-j-1,i}\big].
		\end{align}
		Define $E_{3,q}$ as follows
		\begin{align*}
		E_{3,q+{\ell}} :=&\ a_0\rho^{\ell}E_{3,q} \\
		& + \sum_{i = 0}^{{\ell} - 1}\left(\frac{(N_3 - 1)a_2\Vert C\Vert }{N_2N_3} + \frac{\Vert C\Vert a_1}{N_3}\right)\rho^{i}E_{3,q-i}.
		\end{align*}
		Hence, combining (\ref{2uqk}) with (\ref{2error}) yields
		\begin{align}\label{eq:yqy}
		|y_{q + 1} - Q_1(\hat{y}_{q, \eta})| & \le |y_{q + 1} - \hat{y}_{q, \eta}| + |\hat{y}_{q, \eta} - Q_1(\hat{y}_{q, \eta})|\nonumber\\
		& \le \Vert C\Vert |x_{q + 1} - \hat{x}_{q, \eta}|\nonumber\\
		& \le \hat{\theta}_{na} E_{3,q}
		\le E_{3,q + 1}.
		\end{align}
		
		Moreover, since both the input and output channels are blocked in the presence of DoS attacks, and $\hat{y}_{q,\eta} = 0$, due to the property of $\bar{R}$, it follows that
		\begin{equation*}
		|y_{q + 1} - Q_1(\hat{y}_{q, \eta})| \le \hat{\theta}_{a} E_{3,q} \le E_{3,q + 1}
		\end{equation*}
		and we complete the proof.
	\end{proof}
	Next, we establish upper bounds on the sequences $\{E_{p,q,k}:q \in \mathbb{Z}_{\ge 1}, k = 0, \cdots\!, \eta - 1\} (p = 1, 2, 3)$, whose existence will imply the boundness of state trajectory. 
	\begin{lemma}\label{lem:eqbound}
		Consider system (\ref{continuoussystem}) with controller in (\ref{abdoscontroller}) and (\ref{predoscontroller}), where $K$ satisfies (\ref{dbdb}) and $M$ is chosen such that $R$ is schur stable.
		Let the assumptions and conditions in Thm. \ref{2convergetheorem} hold.
		If further $\{E_{p,q,k}:q \in \mathbb{Z}_{\ge 1}, k = 0, \cdots\!, \eta - 1\} (p = 1, 2, 3)$ obey (\ref{2E1q})-(\ref{2E3q}),  there exist $\Omega_1 \ge 1$ and $\gamma \in (0,1)$ such that
		\begin{equation}\label{Eqomegagamma}
		E_{3, q} \le \Omega_1 \gamma^q |x_0| ,\qquad \forall k \in \mathbb{Z}_{\ge 1}
		\end{equation}
		and $E_{1, q} = E_{1, 0}$, and $E_{2, q, k} \le \Omega_2\gamma^q|x_0|$.
	\end{lemma}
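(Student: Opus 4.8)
The plan is to treat the scalar update (\ref{2E3q}) as a switched iteration driven by the DoS pattern and to bound its gain by an average-dwell-time counting argument using As.~\ref{DoS_frequencyassumption}--\ref{DoS_durationassumption}; the bounds on $E_{1,q}$ and $E_{2,q,k}$ then follow at once from (\ref{2E1q}) and (\ref{2E2qk}). Iterating (\ref{2E3q}) from $E_{3,0}=\Vert C\Vert\,|x_0|$, the quantity $E_{3,q}$ is a product of $q$ factors, each equal to $\hat{\theta}_a$, $\hat{\theta}_0$, or $\hat{\theta}_{na}$ according to the active branch. Writing $n_a(q),n_0(q),n_{na}(q)$ for the numbers of steps of the three types among $\{0,\dots,q-1\}$, so that $n_a+n_0+n_{na}=q$,
\begin{equation*}
E_{3,q}=\hat{\theta}_a^{\,n_a(q)}\,\hat{\theta}_0^{\,n_0(q)}\,\hat{\theta}_{na}^{\,n_{na}(q)}\,\Vert C\Vert\,|x_0| .
\end{equation*}

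The combinatorial core is to bound $n_a$ and $n_0$. A $\hat{\theta}_a$-step occurs exactly at each DoS'd output instant, so $n_a(q)=\Phi_d(q)\le\kappa_d+q/\nu_d$ by As.~\ref{DoS_durationassumption}. A $\hat{\theta}_0$-step is the first healthy step following a DoS interval, hence occurs once per completed DoS interval; since the number of DoS intervals in $[0,q\Delta)$ is at most $\Phi_f(q)$, we get $n_0(q)\le\kappa_f+q/\nu_f$ by As.~\ref{DoS_frequencyassumption}. Consequently $n_{na}(q)=q-n_a(q)-n_0(q)$.

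Next I would pass to logarithms. The lower bound on $N_3$ in (\ref{2Ncondition}), together with $N_2>a_2\Vert C\Vert/(1-\rho)$, is exactly the inequality $\rho+\Vert C\Vert a_1/N_3+\Vert C\Vert a_2(N_3-1)/(N_2N_3)<1$, i.e.\ $\hat{\theta}_{na}<1$. Treating the nontrivial case $\hat{\theta}_a>\hat{\theta}_{na}$ and $\hat{\theta}_0>\hat{\theta}_{na}$ (if either inequality fails, replacing that factor by $\hat{\theta}_{na}$ only tightens the product bound), substituting $n_{na}=q-n_a-n_0$ gives
\begin{equation*}
\log\frac{E_{3,q}}{\Vert C\Vert\,|x_0|}=q\log\hat{\theta}_{na}+n_a\log\frac{\hat{\theta}_a}{\hat{\theta}_{na}}+n_0\log\frac{\hat{\theta}_0}{\hat{\theta}_{na}} .
\end{equation*}
Since the two remaining logarithms are nonnegative, bounding $n_a,n_0$ as above and collecting the terms proportional to $q$ yields
\begin{equation*}
\log\frac{E_{3,q}}{\Vert C\Vert\,|x_0|}\le q\Big(\log\hat{\theta}_{na}+\tfrac{1}{\nu_d}\log\tfrac{\hat{\theta}_a}{\hat{\theta}_{na}}+\tfrac{1}{\nu_f}\log\tfrac{\hat{\theta}_0}{\hat{\theta}_{na}}\Big)+\kappa_d\log\tfrac{\hat{\theta}_a}{\hat{\theta}_{na}}+\kappa_f\log\tfrac{\hat{\theta}_0}{\hat{\theta}_{na}} .
\end{equation*}
Multiplying (\ref{2doscondition}) through by $\log(\hat{\theta}_a/\hat{\theta}_{na})>0$ shows the coefficient of $q$ above is $\le 0$ (and strictly negative under the strict form of (\ref{2doscondition}) needed for an exponential rate); hence $\gamma:=\exp\!\big(\log\hat{\theta}_{na}+\tfrac{1}{\nu_d}\log\tfrac{\hat{\theta}_a}{\hat{\theta}_{na}}+\tfrac{1}{\nu_f}\log\tfrac{\hat{\theta}_0}{\hat{\theta}_{na}}\big)\in(0,1)$ and $\Omega_1:=\max\{1,\,\Vert C\Vert(\hat{\theta}_a/\hat{\theta}_{na})^{\kappa_d}(\hat{\theta}_0/\hat{\theta}_{na})^{\kappa_f}\}\ge 1$ deliver (\ref{Eqomegagamma}). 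Finally, $E_{1,q}=E_{1,0}$ is immediate from (\ref{2E1q}) (indeed $E_{1,0}=0$ by Lemma~\ref{lem:lem1}), and (\ref{2E2qk}) gives at every successful transmission $E_{2,q,k}=\tfrac{N_3-1}{N_3}\Vert K\bar{R}^kM\Vert E_{3,q}\le\Omega_2\gamma^q|x_0|$ with $\Omega_2:=\Omega_1\max_{0\le k\le\eta-1}\Vert K\bar{R}^kM\Vert$ (during DoS intervals $u\equiv 0$, so $E_{2,q,k}$ plays no role there).

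The main obstacle I anticipate is the counting step: recognizing that recovery ($\hat{\theta}_0$) steps are governed by the DoS \emph{frequency} bound (number of intervals, hence $\Phi_f$) rather than the duration bound, and checking that the per-step exponent that emerges coincides exactly with the quantity balanced by (\ref{2doscondition}). The supporting facts --- $\hat{\theta}_{na}<1$ from (\ref{2Ncondition}), $\hat{\theta}_0,\hat{\theta}_a\ge\hat{\theta}_{na}$, and the treatment of $E_{2,q,k}$ during DoS --- are routine once this is in place.
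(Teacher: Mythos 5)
Your proof is correct and follows essentially the same route as the paper: the paper's own proof simply defers the bound on $E_{3,q}$ to Lemma~3.9 of \cite{8880482}, whose argument is exactly the dwell-time counting you carry out (writing $E_{3,q}$ as a product of $\hat{\theta}_a,\hat{\theta}_0,\hat{\theta}_{na}$ factors, bounding the counts by $\Phi_d$ and $\Phi_f$, and using \eqref{2doscondition} to make the per-step exponent nonpositive), and then handles $E_{1,q}$ and $E_{2,q,k}$ via \eqref{2E1q} and \eqref{2uqk} just as you do. Your remark that equality in \eqref{2doscondition} only yields $\gamma=1$ (so strict inequality is needed for a genuine exponential rate) is a fair observation that the paper glosses over, and your choice of $\Omega_2$ with the maximum over $k$ of $\Vert K\bar{R}^kM\Vert$ is actually more careful than the constant stated in the paper.
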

	\begin{proof}
		Using (\ref{2E1q}), $E_{1, q}$ remains unchanged within the considered interval, therefore, $E_{1, q} = E_{1, 0}$ holds for all $q \in \mathbb{Z}_{\ge 1}$.
		The proof for $E_{3, q} \le \Omega \gamma^q |x_0|, \forall q \in \mathbb{Z}_{\ge 1}$ follows directly from that of Lemma
		3.9 in \cite{8880482}, 
		where $\Omega_1 := \frac{\hat{\theta}_0^{\Pi_f + 1}\hat{\theta}_a^{\Pi_d}}{\hat{\theta}_{na}^{\Pi_f + \Pi_d + 1}} \big(\hat{\theta}_{na}\big(\frac{\hat{\theta}_0}{\hat{\theta}_{na}}\big)^{\nu_f}\big(\frac{\hat{\theta}_a}{\hat{\theta}_{na}}\big)^{\nu_d}\big)$.
		Moreover, applying (\ref{2uqk}), $E_{2, q, k} \le \Omega_2\gamma^q |x_0|$ can be verified with $\Omega_2 := \frac{N_3 - 1}{N_3}\Vert K\bar{R}^{\eta - 1}\Vert\Omega_1$.
	\end{proof}
	We are now in a position to prove Thm. \ref{2convergetheorem}.
	\begin{proof}
		[Proof of Theorem \ref{2convergetheorem}]
		We first establish the bound of the state $x$ at the transmission instants, i.e., $|x(q\Delta)|$, then derive its bound at the sampling instants, i.e.,  $|x(q\Delta + k\delta)|$. 
		Finally, combining these two bounds to yield bound $|x(t)|$ in the considered horizon.
		
		First, according to (\ref{continuoussystem}), (\ref{abdoscontroller}), and  (\ref{predoscontroller}), one has
		\begin{align}\label{1xqeta2}
		x_{q, \eta} =&~ \bar{R}^{\eta}x_{q,k} + \sum_{i = 0}^{\eta - 1}\bar{R}^iB_dK(x_{q, \eta - i - 1} - \hat{x}_{q, \eta - i - 1}) \nonumber\\
		& + \sum_{i = 0}^{\eta - 1}\bar{R}^iB_d(Q_2(u_{q, \eta - i - 1}) - K \hat{x}_{q, \eta - i - 1})
		\end{align}
		and
		\begin{align}\label{eq:vertx}
		\vert x_{q, \eta}\vert \le&~ \Vert \bar{R}^{\eta}\Vert |x_{q}| + \sum_{i = 0}^{\eta - 1}\Vert\bar{R}^iB_dK\Vert |(x_{q, \eta - i - 1} - \hat{x}_{q, \eta - i - 1})|\nonumber\\
		& + \sum_{i = 0}^{\eta - 1}\Vert \bar{R}^iB_d\Vert|(Q_2(u_{q, \eta - i - 1}) - K \hat{x}_{q, \eta - i - 1})|.
		\end{align}
		Since (\ref{eq:yqy}), it follows that
		\begin{equation}\label{eq:xqx}
		\Vert x_{q} - \hat{x}_{q - 1, \eta}\Vert \le \frac{E_{3,q}}{\Vert C\Vert}.
		\end{equation}
		Noticing that $\bar{R}^{\eta} = 0$, substituting (\ref{2E2qk}) and (\ref{eq:xqx}) into (\ref{eq:vertx}),
		\begin{align}
		\vert x_{q, \eta}\vert\le&\  \sum_{i = 0}^{\eta - 1} \frac{N_3 - 1}{N_2 N_3}\Vert \bar{R}^i B_d\Vert\Vert K\bar{R}^{\eta - i - 1}M\Vert E_{3,q}\nonumber\\
		& + \sum_{i = 0}^{\eta - 1} \frac{\Vert \bar{R}^i B_d K A_d^{\eta - i - 1}\Vert}{\Vert C\Vert} E_{3,q}\nonumber\\
		\le&\ \Omega_x \Omega_1 \gamma^q |x_0|
		\end{align}
		where $\Omega_x := \sum_{i = 0}^{\eta - 1}\big\{ \frac{N_3 - 1}{N_2 N_3}\Vert \bar{R}^i B_d\Vert\Vert K\bar{R}^{\eta - i - 1}M\Vert  + \frac{\Vert \bar{R}^i B_d K A_d^{\eta - i - 1}\Vert}{\Vert C\Vert}\big\}$, and the last inequality holds due to (\ref{Eqomegagamma}).
		
		Since $x_{q, k+1} = A_dx_{q,k} + B_d Q_2(u_{q,k})$, we have that
		\begin{align}\label{eq:xql}
		\vert x_{q, {\ell}}\vert\le &\  \Vert\bar{R}^{{\ell}}\Vert |x_{q}| + \sum_{i = 0}^{{\ell} - 1} \frac{\Vert \bar{R}^i B_d K A_d^{{\ell} - i - 1}\Vert}{\Vert C\Vert}E_{3,q}\nonumber\\
		&+ \sum_{i = 0}^{{\ell} - 1} \frac{N_3 - 1}{N_2 N_3}\Vert \bar{R}^i B_d\Vert\Vert K\bar{R}^{{\ell} - i - 1}M\Vert E_{3,q}\\
		\le&\ \Omega_x \Omega_1 \gamma^q |x_0| + \Omega_3 \gamma^q |x_0|\nonumber \le \bar{\Omega}_x \gamma^q |x_0|
		\end{align}
		where $\Omega_3 := \Omega_1\max_{{\ell}} \sum_{i = 0}^{{\ell}}\big\{ \frac{N_3 - 1}{N_2 N_3}\Vert \bar{R}^i B_d\Vert\Vert K\bar{R}^{{\ell} - 1}M\Vert  + \frac{\Vert \bar{R}^i B_d K A_d^{\eta - i - 1}\Vert}{\Vert C\Vert}\big\}, {\ell} \in \{1, \cdots\!, \eta - 1\}$,  and $\bar{\Omega}_x := \Omega_3 + \Vert\bar{R}^{{\ell}}\Vert\Omega_x$.
		
		Finally, abiding by (\ref{continuoussystem}), $x(t)$ satisfies
		\begin{align*}
		x(t) = e^{A(t - q\Delta - k\delta)} + \int_{q\Delta + k\delta}^te^{As}BQ_2(u_{q, k}) \,ds
		\end{align*}
		for all $t \in [q\Delta + k\delta, q\Delta + (k+1)\delta)$.
		Combining Lem. \ref{lem:eqbound} and (\ref{eq:xql}), it follows that
		\begin{align*}
		\vert x(t) \vert \le \big(\Vert A_d\Vert \bar{\Omega}_x + \frac{N_2 + 1}{N_2}\Vert B_d\Vert\Omega_2\big)\gamma^q|x_0|\le \tilde{\Omega}_x e^{-\sigma t}|x_0|
		\end{align*}
		where $\sigma := \frac{1}{\eta \delta}\log \frac{1}{\gamma}$ and $\tilde{\Omega}_x := \Vert A_d\Vert \bar{\Omega}_x + \frac{N_2 + 1}{N_2}\Vert B_d\Vert\Omega_2$.
		This implies exponential convergence of the state.
	\end{proof}
	\begin{remark}\label{2dbdbremark}
		Leveraging the same technique as in Rmk. \ref{remark:dbgain}, one can also design $M$ to nullify $R^{\mu} = 0$, where $\mu$ is the observability index of $(C, A_d^{\eta})$.
		A direct benefit from using the deadbeat observer gain is that the encoding schemes can be simplified, since $R^{\ell} = 0$ holds for all $\ell \ge \mu$.  
		However, the results in \cite{WakaikiObserver} indicate that despite exhibiting faster convergence and fewer quantization levels, due to the deadbeat property of matrices $R$ and $\bar{R}$, the quantization step size $E_{p,q}/N_p$ is large, which leads to large quantization errors.
		Moreover, it was shown in \cite{8880482} that if the quantization step size $E_{p,q}/N_p$ grows slower during DoS attacks, then the overshoot from an attack is smaller, and the level of system robustness is stronger.
		Therefore, instead of a deadbeat observer gain, a general one that can make $R$ schur stable is employed in the present work.
	\end{remark}
	\section{Network Phenomena at Output Channel}\label{outputsection}
	In this section, we consider stabilizing linear systems over a communication network, where only the output channel is subject to DoS attacks, i.e., the input channel is assumed ideal; see Fig. \ref{siglenetworkfig}.
	The transmission policy in the previous section is considered here; that is, the digital controller receives quantized output $Q(y_q)$ from the plant with period $\Delta$ and generates control input $u_{q,k}$ with period $\delta$.
	Notice that the decoder can recover the correct quantized value from the index sent by the encoder only if they share the same quantization ranges and centers.
	It is thus necessary to ensure that the encoder and the decoder are \emph{synchronized} before designing encoding schemes.
	A direct way to maintain synchronization is through using an ACK-based protocol; see Fig. \ref{tcpfig}, which has been adopted in previous studies, such as, \cite{feng2020datarate, 8880482}.
	Nevertheless, in real-time applications, protocols without ACKs, e.g., UDP, are often preferred since the resulting implementation is simpler as well as saves the additional energy required for sending ACKs \cite{HongUDP}.
	Hence, in the following, we first show that method for stabilizing systems with ACK-based protocols can no longer be used under ACK-free protocols.
	Then, we demonstrate that our proposed methods can inform the encoder of DoS attacks from zero inputs, thus the decoder and the encoder can be synchronized even without ACKs.
	\begin{figure}
		\centering
		\includegraphics[width=6cm]{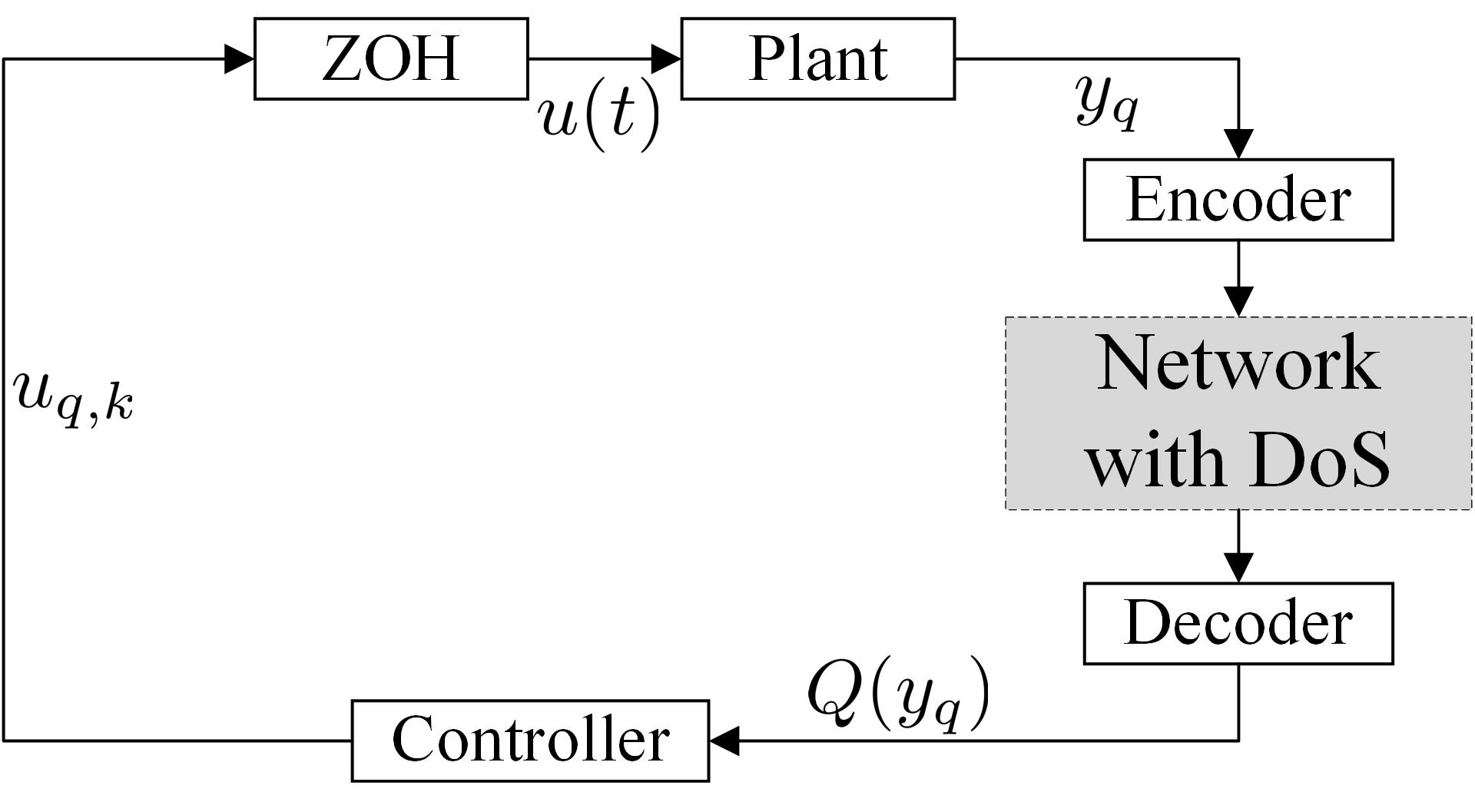}\\
		\caption{Closed-loop system 
			with an ACK-free protocol.}\label{siglenetworkfig}
		\centering
	\end{figure}
	\begin{figure}
		\centering
		\includegraphics[width=6cm]{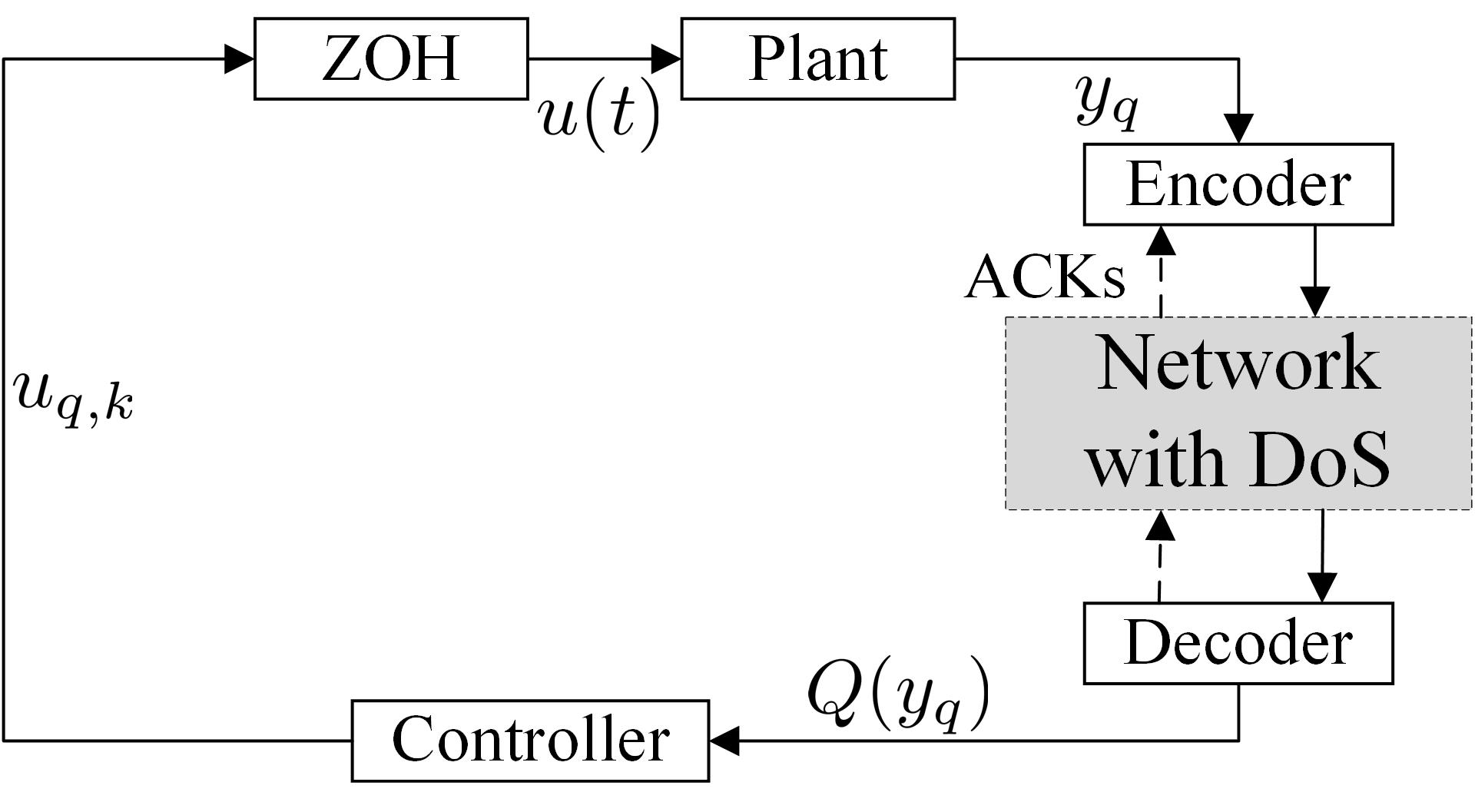}\\
		\caption{Closed-loop system 
			with an ACK-based protocol. The black dashed line represents the ACKs sent from the decoder to the encoder.}\label{tcpfig}
		\centering
	\end{figure}
	\subsection{Controller under an acknowledgment-based protocol}
	Recall that $\{s_r\}_{r\in \mathbb{N}_0}$ collects the sequence of successful transmission instants.
	Let $\delta = \Delta$, and choose $K$ such that $\bar{R} = A_d + B_dK$ is schur stable.
	
	We consider an observer-based controller described by
	\begin{subequations}\label{eq:tcpcontroller}
		\begin{align}
		&\hat{x}_{q+1} = A_d\hat{x}_q + B_du_q + L(Q(y_q) - \hat{y}_q), & q\Delta = s_r \label{eq:tcpcontroller_1}\\
		&\hat{x}_{q+1} = A_d\hat{x}_q + B_du_q,  & q\Delta \ne s_r \label{eq:tcpcontroller_2}\\
		&\hat{y}_q = C\hat{x}_q \label{eq:tcpcontroller_3}\\
		&u_q = K\hat{x}_q \label{eq:tcpcontroller_4}
		\end{align}
	\end{subequations}
	where $\hat{x}_{q}\! \in\! \mathbb{R}^{n_x}, \hat{y}_{q} \!\in\! \mathbb{R}^{n_y}$, and $Q(y_{q}) \!\in\! \mathbb{R}^{n_y}$ are the estimated state, the estimated output, and the quantized output, respectively.
	The initial condition is set to be $\hat{x}_0 = 0$.
	Since the input channel is ideal, it follows that
	\begin{equation*}
	u(t) = u_q, \quad q \Delta \le t < (q + 1)\Delta, \quad q \in \mathbb{Z}_{\ge 0}.
	\end{equation*}
	To design an encoding scheme such that the output $y_q$ can be quantized without saturation, an error bound between the estimated output and the actual output, i.e., $| e_q| := |x_q - \hat{x}_q| \le E_q$, should be derived.
	Based on (\ref{continuoussystem_2}) and (\ref{eq:tcpcontroller_2}), it can be deduced that
	\begin{equation}\label{eq:tcperror}
	|y_q - \hat{y}_q| = |C(x_q - \hat{x}_q)| = |Ce_q| \le \Vert C \Vert E_q.
	\end{equation}
	Let $N$ denote the number of quantization levels of $y_q$.
	Similar to the previous section, we partition the hypercube $	\{ y \in \mathbb{R} ^{n_y} : | y_q - \hat{y}_q| \le \Vert C \Vert E_q\}$
	into $N^{n_y}$ equal-sized boxes.
	The quantization error obeys $	|Q(y_q) - y_q| \le \frac{\Vert C\Vert}{N}E_q$.
	According to As. \ref{x0bound}, the initial value $E_0$ is given by
	\begin{equation}\label{eq:tcpinitial}
	|e_0| = |x_0| \stackrel{\triangle}{=} E_0.
	\end{equation}
	Sequence $\{E_q, q \in \mathbb{Z}_{\ge 1}\}$ will be specified latter.
	Notice that the hypercube center is $\hat{y}_q$, which is generated by the predictor-based observer in (\ref{eq:tcpcontroller}).
	Therefore, this predictor should also be equipped at the encoder side.
	Under ACK-based protocol, the decoder sends ACKs to the encoder without delay at successful transmission instants; and when the encoder does not receive the ACKs, it infers that there is a DoS attack. 
	In this manner, synchronization between these two predictors is ensured, which consequently implies that the quantization ranges and the centers at the encoder are identical to that of the decoder.
	
	Before giving stability condition for ACK-based protocol case, we present an output encoding scheme.
	Let 
	\begin{equation}\label{eq:tcperrorbound}
	E_{q + 1} :=
	\left\{
	\begin{aligned}
	& \theta_a E_{q}, & q\Delta \ne s_r\\
	& \theta_0 E_{q}, & (q-1)\Delta \ne s_r, q\Delta = s_r\\
	& \theta_{na}E_{q}, & (q-1)\Delta = s_r, q\Delta = s_r
	\end{aligned}
	\right.
	\end{equation}
	with 
	\begin{subequations}\label{eq:tcpencode}
		\begin{align}
		\theta_{a} &:=\left\|A_d \right\| \label{eq:tcpencode1}\\
		\theta_{0} &:=H_{0} \rho+\frac{H_1\left\|C \right\|}{N} \label{eq:tcpencode2}\\
		\theta_{na} &:=\rho+\frac{H_1 \left\|C \right\|}{N} \label{eq:tcpencode3}
		\end{align}
	\end{subequations}
	where constants $H_0$, $H_1$, and $0 < \rho < 1$ satisfy
	\begin{equation*}
	\Vert (A_d - LC)^\ell \Vert \le H_0 \rho^{\ell},\quad \Vert (A_d - LC)^\ell L\Vert \le H_1 \rho^{\ell}.
	\end{equation*}
	
	\begin{theorem}\label{th:outputtheorem1}
		Consider system (\ref{continuoussystem}) with controller (\ref{eq:tcpcontroller}), where $M$ and $K$ are chosen such that $A_d - LC$ and $A_d + B_d K$ are schur stable.
		Under As. \ref{as:abca}--\ref{DoS_durationassumption}, if i) the quantization levels 
		\begin{equation}\label{2Ntcpcondition}
		\begin{split}
		N > \frac{H_1\Vert C\Vert}{1 - \rho}
		\end{split}
		\end{equation}
		and, ii) DoS attacks satisfy
		\begin{equation}\label{2dostcpcondition}
		\frac{1}{\nu_d} \le \frac{\log{(1/{\theta}_{na})}}{\log{({\theta}_a/{\theta}_{na})}} - \frac{\log{({\theta}_0/{\theta}_{na})}}{\log{(\theta_a/{\theta}_{na})}}\frac{1}{\nu_f}
		\end{equation}
		then the system is exponentially stable under the encoding scheme with error bounds $\{E_{q}:q \in \mathbb{Z}_{\ge 1}\}$ constructed by the update rule (\ref{eq:tcperrorbound}).
	\end{theorem}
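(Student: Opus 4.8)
\emph{Proof plan.} The plan is to reuse the two-lemma route behind Theorem~\ref{2convergetheorem}, specialized to the much simpler regime here where the input channel is ideal and $\delta=\Delta$, so the controllability index is irrelevant and there is a single scalar bound $E_q$ to keep. The argument decomposes into (I) showing that \eqref{eq:tcperrorbound} never lets the output quantizer saturate, i.e.\ $|e_q|\le E_q$ for all $q\in\mathbb{Z}_{\ge0}$, and (II) showing $E_q\to0$ geometrically under Assumptions~\ref{DoS_frequencyassumption}--\ref{DoS_durationassumption} and \eqref{2dostcpcondition}, and then pushing that decay forward to $|x_q|$ at the transmission instants and to $|x(t)|$ in between.

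For part~(I), the first step is the error recursion: from \eqref{continuoussystem_2}, \eqref{eq:tcpcontroller_1}, \eqref{eq:tcpcontroller_2} and $u_q=K\hat x_q$ one obtains $e_{q+1}=(A_d-LC)e_q-L(Q(y_q)-y_q)$ at a successful instant $q\Delta=s_r$ and $e_{q+1}=A_de_q$ during an attack. The DoS step gives the required bound immediately, $|e_{q+1}|\le\|A_d\|\,|e_q|=\theta_aE_q=E_{q+1}$. The successful step is where care is needed, since $\|A_d-LC\|$ may exceed $1$ and a naive per-step induction fails; instead I would unroll $e_{q+1}$ back to the first instant $q_0$ of the current maximal run of consecutive successful transmissions, $e_{q+1}=(A_d-LC)^{q-q_0+1}e_{q_0}-\sum_{j=q_0}^{q}(A_d-LC)^{q-j}L(Q(y_j)-y_j)$, bound $|e_{q_0}|\le E_{q_0}$ (either $q_0=0$, or $q_0-1$ is a DoS instant, where the DoS bound just proved applies), and use $\|(A_d-LC)^\ell\|\le H_0\rho^\ell$, $\|(A_d-LC)^\ell L\|\le H_1\rho^\ell$ together with $|Q(y_j)-y_j|\le\frac{\|C\|}{N}E_j$. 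Since on this run $E_{q_0+1}=\theta_0E_{q_0}$ and $E_{q_0+1+i}=\theta_{na}^{i}E_{q_0+1}$ (adopting the convention $(-1)\Delta\ne s_r$), dividing the resulting estimate by $E_{q_0}$ and substituting \eqref{eq:tcpencode2}--\eqref{eq:tcpencode3} reduces it to the elementary inequality $\rho^{m}+\frac{H_1\|C\|}{N}\sum_{i=0}^{m-1}\rho^{m-1-i}\theta_{na}^{i}\le\theta_{na}^{m}$, $m:=q-q_0$, which is proved by a one-line induction on $m$ (equality at $m=0$; the step uses $\rho\,\theta_{na}^{m-1}+\frac{H_1\|C\|}{N}\theta_{na}^{m-1}=\theta_{na}^{m}$). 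This closes the induction on $q$, whence $|y_q-\hat y_q|=|Ce_q|\le\|C\|E_q$; and since, under the ACK-based protocol of Fig.~\ref{tcpfig}, encoder and decoder run the same predictor \eqref{eq:tcpcontroller} and the same rule \eqref{eq:tcperrorbound}, their quantization centers and ranges coincide and decoding is consistent. I expect this unrolling to be the main (though, as with Lemma~\ref{lem:lem1}, essentially routine) obstacle.

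For part~(II), I would exploit the switched form of \eqref{eq:tcperrorbound}. Over $[0,q\Delta)$ let $n_1,n_2,n_3$ count the indices invoking $\theta_a,\theta_0,\theta_{na}$, so $n_1+n_2+n_3=q$ and $E_q=\theta_a^{n_1}\theta_0^{n_2}\theta_{na}^{n_3}E_0=\theta_{na}^{q}(\theta_a/\theta_{na})^{n_1}(\theta_0/\theta_{na})^{n_2}E_0$, where $n_1=\Phi_d(q)\le\kappa_d+q/\nu_d$ by Assumption~\ref{DoS_durationassumption} and $n_2$, the number of DoS intervals, satisfies $n_2\le\Phi_f(q)+1\le\kappa_f+1+q/\nu_f$ by Assumption~\ref{DoS_frequencyassumption}. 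In the nontrivial case $\theta_a=\|A_d\|>\theta_{na}$ (which is what makes \eqref{2dostcpcondition} meaningful), and $\theta_0\ge\theta_{na}$ since $H_0\ge1$; as both ratios exceed $1$, substituting the two count bounds yields $E_q\le\Omega_E\gamma^{q}E_0$ with a constant $\Omega_E\ge1$ and $\gamma:=\theta_{na}(\theta_a/\theta_{na})^{1/\nu_d}(\theta_0/\theta_{na})^{1/\nu_f}$; taking logarithms shows \eqref{2dostcpcondition} is precisely $\log\gamma\le0$, and, exactly as in Theorem~\ref{2convergetheorem}, this gives $\gamma\in(0,1)$, i.e.\ geometric decay of $E_q$ (the analogue of Lemma~\ref{lem:eqbound} and of \cite[Lem.~3.9]{8880482}). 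It then remains to propagate: since $A_d+B_dK$ is Schur, $\|(A_d+B_dK)^\ell\|\le c\beta^\ell$ for some $\beta\in(0,1)$, and iterating $x_{q+1}=A_dx_q+B_du_q=(A_d+B_dK)x_q-B_dKe_q$ and convolving with $|e_i|\le E_i\le\Omega_E\gamma^{i}|x_0|$ yields $|x_q|\le\Omega\,\gamma_x^{q}|x_0|$ with $\gamma_x:=\max\{\beta,\gamma\}$ (slightly enlarged to absorb the polynomial factor if $\beta=\gamma$); then $|\hat x_q|\le|x_q|+|e_q|$ and $|u_q|=|K\hat x_q|$ decay at the same rate. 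Finally, for $t\in[q\Delta,(q+1)\Delta)$, $x(t)=e^{A(t-q\Delta)}x_q+\int_{q\Delta}^{t}e^{A(t-s)}Bu_q\,ds$, so $|x(t)|\le(\sup_{0\le\tau<\Delta}\|e^{A\tau}\|)(|x_q|+\Delta\|B\|\,|u_q|)\le\tilde\Omega\,\gamma_x^{q}|x_0|$, and since $q>t/\Delta-1$ this is $\le\tilde\Omega_xe^{-\sigma t}|x_0|$ with $\sigma:=\frac{1}{\Delta}\log(1/\gamma_x)>0$, which is the claimed exponential stability.
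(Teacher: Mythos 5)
Your plan is correct and follows essentially the same route the paper intends: the paper omits this proof and defers to \cite[Thm.~3.4]{8880482}, whose argument is exactly your two-stage scheme (non-saturation of the quantizer via unrolling $e_q$ over each maximal run of successful transmissions, then geometric decay of $E_q$ by counting $\theta_a$-, $\theta_0$-, and $\theta_{na}$-steps against Assumptions~\ref{DoS_frequencyassumption}--\ref{DoS_durationassumption}, then propagation to $x_q$ and $x(t)$), mirroring Lemmas~\ref{lem:lem1}--\ref{lem:eqbound} of Theorem~\ref{2convergetheorem}. Your key inequality $\rho^{m}+\frac{H_1\Vert C\Vert}{N}\sum_{i=0}^{m-1}\rho^{m-1-i}\theta_{na}^{i}\le\theta_{na}^{m}$ checks out, and the only (inherited, not introduced) blemish is that the non-strict inequality in \eqref{2dostcpcondition} yields $\gamma\le1$ rather than $\gamma<1$ in the boundary case.
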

	The proof is similar to that of \cite[Thm. 3.4]{8880482} and is thus omitted here due to space limitations.
	\subsection{Controller under an acknowledgment-free protocol}\label{1Acontrollersection}
	In this subsection, we show that the aforementioned controller and encoding scheme cannot stabilize the system when the ACK-based protocol is replaced by an ACK-free protocol.
	This is because synchronization between the encoder and decoder is no longer guaranteed.
	To see this, consider controller (\ref{eq:tcpcontroller}) with the encoding scheme in (\ref{eq:tcperrorbound}) employing an ACK-free protocol.
	In this setting, predictors at the encoder and decoder sides may become asynchronized, since no matter whether DoS attacks happen or not, the decoder does not send ACKs to the encoder.
	When a DoS attack occurs, the predictor at the controller side switches to (\ref{eq:tcpcontroller_2}), whereas the predictor at the encoder side sticks to (\ref{eq:tcpcontroller_1}).
	Moreover, the update rule of sequence $E_q$ at the decoder switches to (\ref{eq:tcpencode1}), while adhering to (\ref{eq:tcpencode2})-(\ref{eq:tcpencode3}) at the encoder.
	As a result, their quantization ranges and centers may deviate, and the correct output value cannot be recovered by the decoder.
	We prove that even if one DoS attack occurs (i.e.,  decoder and encoder are asynchronized for only one transmission period), the state may diverge eventually.
	
	To distinguish between predictors at the encoder and decoder, let $\hat{x}_q$, $\hat{y}_q$, and $\hat{Q}(y_q)$ denote the estimated state, estimated output, and quantized output at the controller side, and $\tilde{x}_q, \tilde{y}_q$, and $Q(y_q)$ denote their counterparts at the encoder side.
	In addition, let $u_q$ stand for the input sent by the controller, and $\tilde{u}_q$ the estimated input generated by the predictor at the encoder side. 
	Predictor at the controller side can be expressed by
	\begin{subequations}\label{eq:tcpdecoder}
		\begin{align}
		&\hat{x}_{q+1} = A_d\hat{x}_q + B_du_q + L(\hat{Q}(y_q) - \hat{y}_q), & q\Delta = s_r \label{eq:tcpdecoder_1}\\
		&\hat{x}_{q+1} = A_d\hat{x}_q + B_du_q,  & q\Delta \ne s_r \label{eq:tcpdecoder_2}\\
		&\hat{y}_q = C\hat{x}_q \label{eq:tcpdecoder_3}\\
		&u_q = K\hat{x}_q \label{eq:tcpdecoder_4}
		\end{align}
	\end{subequations}
	and predictor at the encoder side is described by
	\begin{subequations}\label{eq:tcpencoder}
		\begin{align}
		&\tilde{x}_{q+1} = A_d\tilde{x}_q + B_d\tilde{u}_q + L({Q}(y_q) - \tilde{y}_q)\label{eq:tcpencoder_1}\\
		&\tilde{y}_q = C\tilde{x}_q \label{eq:tcpencoder_2}\\
		&\tilde{u}_q = K\tilde{x}_q
		\end{align}
	\end{subequations}
	where $q \in \mathbb{Z}_{\ge 0}$.
	Similarly, let $E_{d,q}$, and $E_{e,q}$ denote the error bound at the decoder, and the encoder side, respectively
	\begin{equation*}
	\begin{aligned}
	&E_{d,q + 1} :=
	\left\{
	\begin{aligned}
	& \theta_a E_{d,q}, & q\Delta \ne s_r\\
	& \theta_0 E_{d,q}, & (q-1)\Delta \ne s_r, q\Delta = s_r\\
	& \theta_{na}E_{d,q}, & (q-1)\Delta = s_r, q\Delta = s_r
	\end{aligned}
	\right.\\
	&E_{e,q + 1} :=
	\left\{
	\begin{aligned}
	& \theta_0 E_{e,q}, & q\Delta = 0\\
	& \theta_{na}E_{e,q}, & q\Delta > 0
	\end{aligned}
	\right.
	\end{aligned}
	\end{equation*}
	where $\theta_{0}$, $\theta_{a}$, and $\theta_{na}$ are defined in (\ref{eq:tcpencode}).
	Accordingly, the errors at the encoder and decoder sides are $e_{e,q} := x_q - \tilde{x}_q$, and $e_{d,q} := x_q - \hat{x}_q$.
	Moreover, the quantized outputs in (\ref{eq:tcpdecoder_2}) and (\ref{eq:tcpencoder_1}) are
	\begin{align}\label{hatQ}
	\hat{Q}(y_q) = \hat{y}_q + Q^{i}_q\frac{\Vert C\Vert E_{d,q}}{N}\\
	Q(y_q) = \tilde{y}_q + Q^{i}_q\frac{\Vert C\Vert E_{e,q}}{N}
	\end{align}
	where $Q_q^{i}$ denotes the quantization index transmitted from the encoder to the decoder.

	Suppose that a DoS attack is launched at $q_a\Delta$ and no attacks happen before or after $q_a\Delta$. It follows that $\hat{Q}(y_q) = Q(y_q)$ for all $q \le q_a$, and
	\begin{subequations}\label{tildex-x}
		\begin{align}
		&\hat{x}_{q_a} = \tilde{x}_{q_a} \\
		&\hat{x}_{q_a\!+\!1} = (A_d \!+\! B_d K)\hat{x}_{q_a} \\
		&\tilde{x}_{q_a \!+\! 1} = (A_d \!+\! B_d K) \tilde{x}_{q_a} \!+\! L Q^{i}_{q_a}\frac{\Vert C\Vert E_{e,q_a}}{N} \\
		&\hat{x}_{q_a+2}  = (A_d \!+\! B_d K)^2 \hat{x}_{q_a} \!+\! L Q^{i}_{q_a\!+\!1}\frac{\Vert C\Vert E_{d,q_a \!+\! 1}}{N}\\
		\begin{split}
		&\tilde{x}_{q_a + 2} =
		(A_d \!+\! B_d K)^2 \tilde{x}_{q_a} \!+\! L Q^{i}_{q_a+1}\frac{\Vert C\Vert E_{e,q_a + 1}}{N}\\
		&~~~~~~~~~~ \!+\! (A_d \!+\! B_d K)L Q^{i}_{q_a}\frac{\Vert C\Vert E_{e,q_a}}{N} \\
		\end{split}\\
		& \cdots \nonumber
		\end{align}	
	\end{subequations}
	Notice that the quantizer operates normally without saturation only if $E_{e,q} \ge |e_{e,q}| = |x_{q} - \tilde{x}_q|$ and $E_{d,q} \ge |e_{d,q}| = |x_{q} - \hat{x}_q|$ hold for all $q \in \mathbb{Z}_{\ge 1}$.
	If the quantizer saturates, the error between the actual output and the quantized output maybe large, which consequently renders the system unstable.
	In the following, we assume that the quantizer is not saturated; that is $E_{e,q} \ge |e_{e,q}|$ and $E_{d,q} \ge |e_{d,q}|$ for all $q \in \mathbb{Z}_{\ge 1}$, and reach a contradiction.
	Since $E_{e,q + 1} = \theta_{na}E_{e,q}, q > 0$, and $\theta_{na} <1$, sequence $\{|x_q - \tilde{x}_q|\}$ is decreasing.
	Let $\Pi_{L} := A_d - LC$ and $\Pi_{K} := A_d + B_dK$.
	Combining (\ref{hatQ}) and (\ref{tildex-x}) yields
	\begin{align*}
	|x_{q_a + 1} - \tilde{x}_{q_a + 1}| & = |\Pi_{L} (x_{q_a} - \tilde{x}_{q_a}) - L(Q(y_{q_a}) - y_{q_a})|\\
	& \le E_{e, q_a + 1} \stackrel{\triangle}{=} \tilde{E}_{e,q_a + 1}.
	\end{align*}
	Likewise,
	\begin{align*}
	&~|x_{q_a + 2}- \tilde{x}_{q_a + 2}| \\
	=&\ \big|\Pi_{L} (x_{q_a + 1} - \tilde{x}_{q_a + 1}) - L(Q(y_{q_a + 1}) - y_{q_a + 1})\\
	& - BKLQ^{i}_{q_a}\frac{\Vert CR^{-1}\Vert E_{e,q_a}}{N}\big|\\
	\le &\ E_{q_a + 2}  + \frac{1}{\theta_{na}^2}\frac{\Vert BKLQ^{i}_{q_a}\Vert\Vert CR^{-1}\Vert}{N}E_{e,q_a+2}\\
	\stackrel{\triangle}{=} &\ \tilde{E}_{e,q_a + 2}.
	\end{align*}
	Iteratively, for $\ell \ge 3$, it follows that
	\begin{align*}
	&~ |x_{q_a + \ell} - \tilde{x}_{q_a + \ell}|\\
	\le&\ E_{e, q_a + \ell} + \frac{1}{\theta_{na}^\ell}\frac{\Vert B_dK\Pi_{K}^{\ell - 1}LQ^{i}_{q_a}\Vert\Vert C\Vert E_{e, q_a}}{N}\\
	& + \frac{1}{\theta_{na}^\ell}\frac{\Vert B_dK\Pi_{K}^{\ell - 2}LQ^{i}_{q_a + 1}\Vert\Vert C\Vert(\theta_{a} - \theta_{na}) E_{e, q_a}}{N}\\
	& + \sum_{i = 0}^{\ell - 3}\frac{1}{\theta_{na}^{i + 3}}\frac{\Vert B_dK\Pi_{K}^{i}LQ^{i}_{q_a + \ell - i - 1}\Vert\Vert CR^{-1}\Vert}{N}
	\\
	&~~ \times (\theta_0\theta_a - \theta_{na}^2) E_{e, q_a}\\
	\stackrel{\triangle}{=}& ~\tilde{E}_{e,q_a + \ell}.
	\end{align*}
	Since ${1}/{\theta_{na}}>1$,  $\{\tilde{E}_{e,q}\}$ is an increasing sequence, which contradicts the assumption that $\{|x_q - \tilde{x}_q|\}$ is a decreasing sequence.
	Therefore, it can be concluded that without ACKs, predictors at the encoder and controller sides may get asynchronized even if there is a single DoS attack.
	This causes mismatches on their quantization centers and ranges, and there exists $\hat{q} \ge q_a$ such that $E_{e, q} < |x_{q} - \tilde{x}_{q}|$ holds for all $q \ge \hat{q}$, and the state diverges eventually.
	
	We have just shown that the synchronization between decoder and encoder is essential.
	However, ACK-based protocol is not the only way to achieve this goal.
	In the absence of ACKs, this challenge can be overcome by using a deadbeat controller, and the prove will be given in the following.
	Let the number of the quantization level $N$ to be even.
	We adopt the same quantizer as in (\ref{eq:tcperror})-(\ref{eq:tcpinitial}), with $\hat{x}_q$, $e_q$, and $\hat{y}_q$ replaced by $\hat{x}_{q-1, \eta}$, $e_{q-1, \eta}$, and $\hat{y}_{q - 1, \eta}$, respectively. 
	The observer-based controller is employed only at the decoder side
	\begin{subequations}\label{controller1}
		\begin{align}
		&\hat{x}_{q, k+1} = A_d\hat{x}_{q,k} + B_du_{q,k}, & k&\le \eta - 1 \label{controller1_1}\\
		&\hat{x}_{q} = \hat{x}_{q-1, \eta} + M_q[Q(y_{q}) - \hat{y}_{q-1, \eta}], & k &= \eta \label{controller1_2}\\
		&\hat{y}_{q,k} = C \hat{x}_{q,k} \label{controller1_3}\\
		&u_{q,k} = K \hat{x}_{q,k} \label{controller1_4}
		\end{align}
	\end{subequations}
	Thanks to the ideal input channel,  
	\begin{equation*}\label{ut=uk}
	u(t) = u_{q,k}, \qquad q\Delta + k\delta \le t < q\Delta +(k+1)\delta
	\end{equation*}
	for every $q \in \mathbb{Z}_{\ge 0}$, and $k = 0, \cdots\!, \eta - 1$.
	
	Consider an arbitrary transmission interval $[q \Delta, (q + 1)\Delta)$.
	From the property (\ref{dbdb}), one gets that $\hat{x}_{q,\eta} = (A_d + B_d K)^{\eta}\hat{x}_{q} = 0$, and $\hat{y}_{q, \eta} = C\hat{x}_{q, \eta} = 0$.
	It is thus sufficient to choose the quantization center to be the origin, and predictor (\ref{controller1}) is not needed at the encoder side.
	This saves computational resources.
	
	If an attack is launched at $(q + 1)\Delta$, the decoder is not going to receive the quantized output $Q(y_{q + 1})$, and instead it will use a default zero.
	Then, it follows from (\ref{controller1_1})-(\ref{controller1_2}) that $\hat{x}_{q + 1} = \hat{x}_{q,\eta} = 0$, and $u_{q+1} = K\hat{x}_{q + 1} = 0$.
	On the other hand, in the absence of DoS attacks, since the quantization center is zero and $N$ is even, the quantized value is nonzero. 
	Therefore, the decoder receives a quantized output $Q(y_{q + 1}) \ne 0$.
	As a result, $\hat{x}_{q + 1} = \hat{x}_{q, \eta} + M(Q(y_{q+1}) - \hat{y}_{q, \eta})= MQ(y_{q + 1}) \ne 0$, and $u_{q+1} = K\hat{x}_{q + 1} \ne 0$.
	This suggests that the encoder can infer whether there is an attack or not from the input signals, thus its quantization ranges can be updated following the same scheme with the decoder.
	
	We have secured  synchronization between the encoder and decoder.
	Now, what is left behind is the system stability analysis.
	Recall that $A_d^{\eta}(I - MC)$ is schur stable, there exist constants $G_0, G_1$, and $0 <\rho <1$ such that
	\begin{equation}\label{m0m1}
	\begin{aligned}
	\big\Vert R^{\ell}\big\Vert \le G_0 \rho^{\ell},\quad
	\big\Vert {R}^{\ell}A_d^{\eta}M\big\Vert \le G_1 \rho^{\ell}.
	\end{aligned}
	\end{equation}
	Define constants 
	\begin{align*}
	\tilde{\theta}_{a} := \Vert A_d^{\eta}\Vert,~~
	\tilde{\theta}_{0} := G_0\rho + \frac{G_1\Vert C\Vert}{N},~~
	\tilde{\theta}_{na} := \rho + \frac{G_1\Vert C\Vert}{N}
	\end{align*}
	and the error bound $\{E_q:q \in \mathbb{Z}_{\ge 1}\}$ is updated by
	\begin{equation}\label{errorboundEq}
	E_{q + 1} :=
	\left\{
	\begin{aligned}
	& \tilde{\theta}_a E_q, & q\Delta \ne s_r\\
	& \tilde{\theta}_0 E_q, & (q-1)\Delta \ne s_r, q\Delta = s_r\\
	& \tilde{\theta}_{na}E_q, & (q-1)\Delta = s_r, q\Delta = s_r
	\end{aligned}
	\right..
	\end{equation}
	
	The following result is an extension of Thm. \ref{th:outputtheorem1} under an ACK-free protocol, whose proof follows from that of Thm. \ref{th:outputtheorem1}.
	\begin{theorem}\label{1convergetheorem}
		Consider system (\ref{continuoussystem}) equipped with controller in (\ref{controller1}), where $M$ and $K$ are chosen such that $R$ is schur stable and (\ref{dbdb}) is met.
		Let As. \ref{as:abca}--\ref{DoS_durationassumption} hold.
		If i) the output and input transmission periods satisfy (\ref{delta}), ii) the quantization levels $N$ is even, and obey
		\begin{equation}\label{Ncondition}
		N > \frac{G_1\Vert C\Vert}{1 - \rho}
		\end{equation}
		and, iii) DoS attacks satisfy
		\begin{equation}\label{1doscondition}
		\frac{1}{\nu_d} \le \frac{\log{(1/\tilde{\theta}_{na})}}{\log{(\tilde{\theta}_a/\tilde{\theta}_{na})}} - \frac{\log{(\tilde{\theta}_0/\tilde{\theta}_{na})}}{\log{(\tilde{\theta}_a/\tilde{\theta}_{na})}}\frac{1}{\nu_f}
		\end{equation}
		then the system is exponentially stable under the encoding scheme with error bound $\{E_q:q \in \mathbb{Z}_{\ge 1}\}$ constructed by (\ref{errorboundEq}).
	\end{theorem}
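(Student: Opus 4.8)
The plan is to reduce the analysis to that of Theorem~\ref{th:outputtheorem1} by showing that, under the deadbeat controller~(\ref{controller1}), the multi-rate dynamics collapse onto a slow-timescale error recursion of exactly the same form. First I would use the deadbeat identity (\ref{dbdb}): since $\hat{x}_{q-1,\eta}=\bar{R}^{\eta}\hat{x}_{q-1}=0$ regardless of the attack pattern, the quantity fed to the output quantizer is $\zeta_q:=x_q-\hat{x}_{q-1,\eta}=x_q$, and the quantizer is non-saturated whenever $|\zeta_q|\le E_q$ because $|y_q-\hat{y}_{q-1,\eta}|=|C\zeta_q|\le\|C\||\zeta_q|$. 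Next I would write the one-step recursion for $\zeta_q$. Over a transmission interval one has $e_{q,k}:=x_{q,k}-\hat{x}_{q,k}=A_d^k e_{q,0}$, and $x_{q,\eta}=x_{q+1}$; combining this with (\ref{controller1_2}) and $\hat{x}_{q-1,\eta}=0$ gives, for a successful transmission at interval $q$, $\zeta_{q+1}=R\zeta_q-A_d^{\eta}M\,(Q(y_q)-y_q)$, and for an attacked interval (where $\hat{x}_q=0$ and $u_{q,k}=0$, so $x_{q,\eta}=A_d^{\eta}x_q$), $\zeta_{q+1}=A_d^{\eta}\zeta_q$. These are precisely the recursions in the proof of Theorem~\ref{th:outputtheorem1} under the substitutions $A_d\mapsto A_d^{\eta}$, $A_d-LC\mapsto R$, $L\mapsto A_d^{\eta}M$, $H_0\mapsto G_0$, $H_1\mapsto G_1$; iterating over a run of $\ell$ consecutive successful transmissions yields $\zeta_{q+\ell}=R^{\ell}\zeta_q-\sum_{j=0}^{\ell-1}R^{\ell-1-j}A_d^{\eta}M\,(Q(y_{q+j})-y_{q+j})$, whose norm is controlled by (\ref{m0m1}).

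I would then invoke the same induction as in Theorem~\ref{th:outputtheorem1}: with $E_0=|x_0|$ and the update rule~(\ref{errorboundEq}), one gets $|\zeta_q|\le E_q$ for all $q\ge1$, where condition~(\ref{Ncondition}) guarantees $\tilde{\theta}_{na}=\rho+G_1\|C\|/N<1$ so that the gap between the one-step factor $\|R\|$ and the asymptotic rate $\rho$ is absorbed over runs of successful transmissions exactly as there. Next I would bound $E_q$ itself using Assumptions~\ref{DoS_frequencyassumption}--\ref{DoS_durationassumption}: over $[0,q\Delta)$ at most $\kappa_d+q/\nu_d$ intervals carry the factor $\tilde{\theta}_a$, at most $\kappa_f+q/\nu_f$ carry $\tilde{\theta}_0$ (first success after an attack), and the remaining $\ge q-(\kappa_d+\kappa_f)-q(1/\nu_d+1/\nu_f)$ carry $\tilde{\theta}_{na}$, so
\[
E_q\le \Omega\,\big[\,\tilde{\theta}_a^{1/\nu_d}\,\tilde{\theta}_0^{1/\nu_f}\,\tilde{\theta}_{na}^{\,1-1/\nu_d-1/\nu_f}\,\big]^{q}\,|x_0|,
\]
and taking logarithms shows that condition~(\ref{1doscondition}) is exactly what makes the bracketed quantity some $\gamma\in(0,1)$. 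This is the argument of \cite[Lem.~3.9]{8880482}, already used in the proof of Lemma~\ref{lem:eqbound}.

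Finally I would propagate $|x_q|=|\zeta_q|\le E_q\le\Omega\gamma^q|x_0|$ to all continuous times. Within a successful interval the deadbeat structure gives $x_{q,k}=\bar{R}^k x_q-\sum_{i=0}^{k-1}\bar{R}^{\,k-1-i}B_dK A_d^i e_{q,0}$ with $|e_{q,0}|=|x_q-MQ(y_q)|\le\big(\|I-MC\|+\|M\|\|C\|/N\big)E_q$, while in an attacked interval $x_{q,k}=A_d^k x_q$; in both cases $|x_{q,k}|\le c_1\gamma^q|x_0|$ for a constant $c_1$ and all $k=0,\dots,\eta-1$. Interpolating over $[q\Delta+k\delta,q\Delta+(k+1)\delta)$ via $x(t)=e^{A(t-q\Delta-k\delta)}x_{q,k}+\int_{q\Delta+k\delta}^{t}e^{A(t-s)}B u_{q,k}\,ds$ yields $|x(t)|\le c_2\gamma^q|x_0|\le c_2\,e^{-\sigma t}|x_0|$ with $\sigma=\tfrac{1}{\Delta}\log(1/\gamma)$ (using $\delta=\Delta/\eta$ from~(\ref{delta})), which is the claimed exponential stability.

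The main obstacle I expect is the first step: confirming that the deadbeat property genuinely reduces the multi-rate ($\delta=\Delta/\eta$) closed loop to a single-rate recursion matching Theorem~\ref{th:outputtheorem1} --- in particular that the intermediate estimates $\hat{x}_{q,k}$ and inputs $u_{q,k}$ cancel cleanly through $e_{q,\eta}=A_d^{\eta}e_{q,0}$, and that the ACK-free synchronization mechanism (quantizer centered at the origin, $N$ even $\Rightarrow$ $Q(y_q)\ne 0\Leftrightarrow$ no attack, as established before the theorem) stays consistent across the induction. Once this structural reduction is in place, the quantitative part is a verbatim transcription of the Theorem~\ref{th:outputtheorem1} / \cite{8880482} argument.
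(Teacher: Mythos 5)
Your proposal is correct and follows essentially the same route the paper intends: the paper proves Theorem~\ref{1convergetheorem} by reducing it to Theorem~\ref{th:outputtheorem1} (itself deferred to \cite[Thm.~3.4]{8880482}), using exactly the deadbeat identity $\hat{x}_{q,\eta}=\bar{R}^{\eta}\hat{x}_q=0$ established in the discussion preceding the theorem to collapse the multi-rate loop onto the single-rate recursion $x_{q+1}=Rx_q-A_d^{\eta}M(Q(y_q)-y_q)$ (success) and $x_{q+1}=A_d^{\eta}x_q$ (attack). Your write-up simply supplies the details the paper omits, and they check out.
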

	\begin{figure}[b]
		\centering
		\includegraphics[width=9cm]{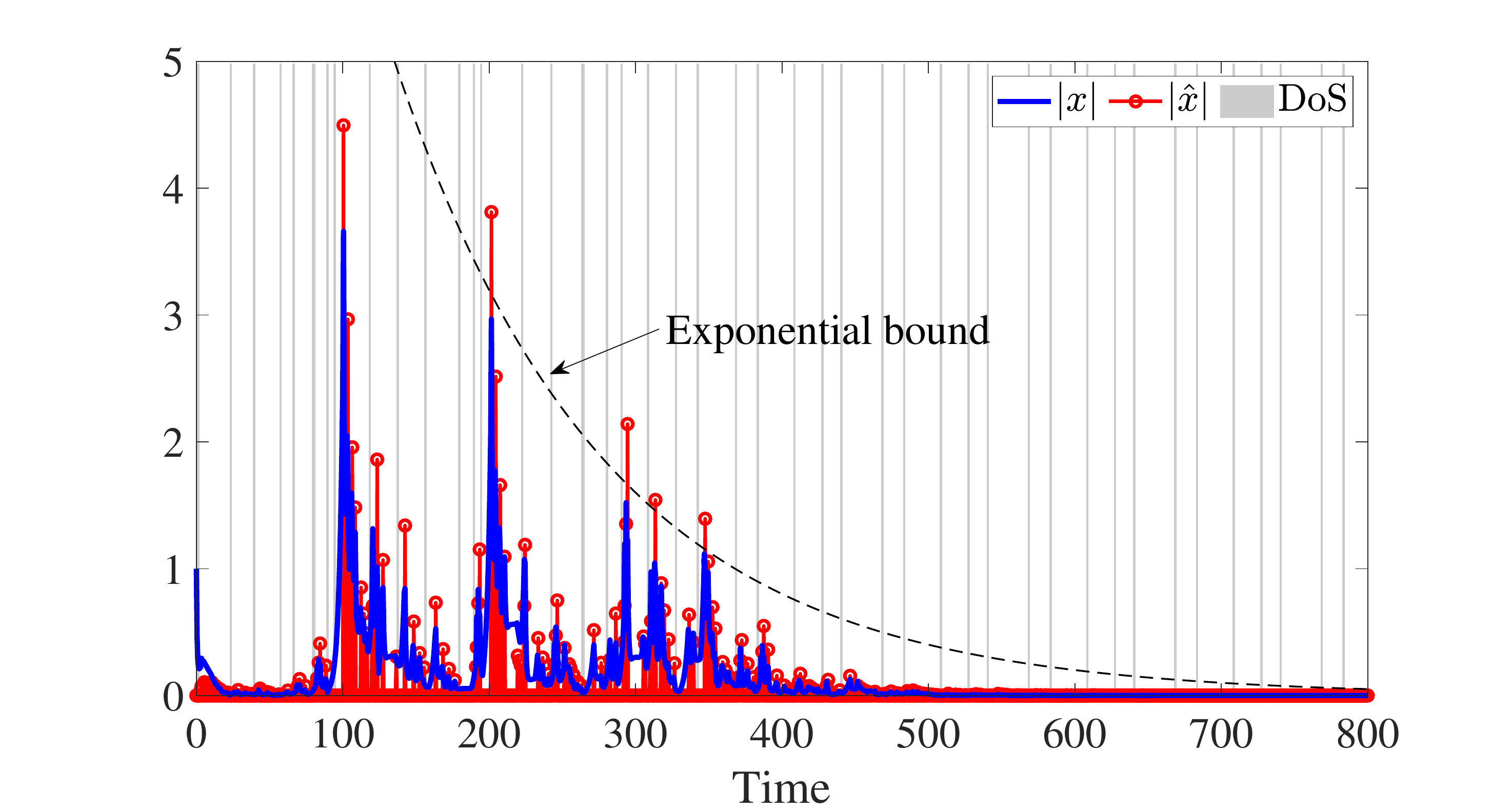}\\
		\caption{Maximum norm of state $x$ and its estimate $\hat{x}$ with controller (\ref{abdoscontroller}).}\label{figdoubledblqnormx}
		\centering
	\end{figure}
	\begin{figure}
		\centering
		\includegraphics[width=9cm]{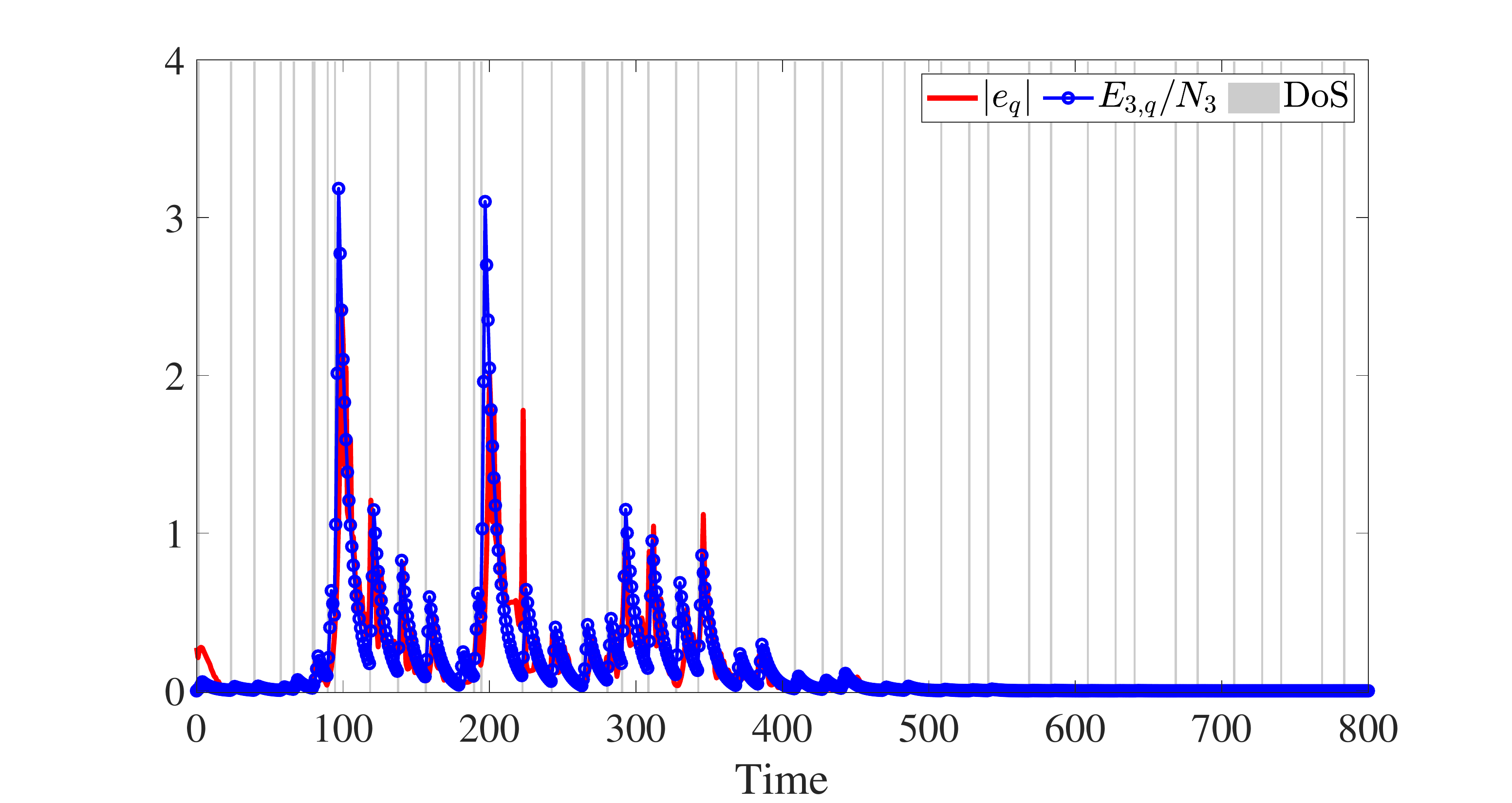}\\
		\caption{Relationship between normalized quantization range $E_{3,k}/N_3$ and actual error $|y_q - Q_1(\hat{y}_q)|$ with controller (\ref{abdoscontroller}).}\label{figdoubledblqnormE}
		\centering
	\end{figure}
	\begin{figure}
		\centering
		\includegraphics[width=9cm]{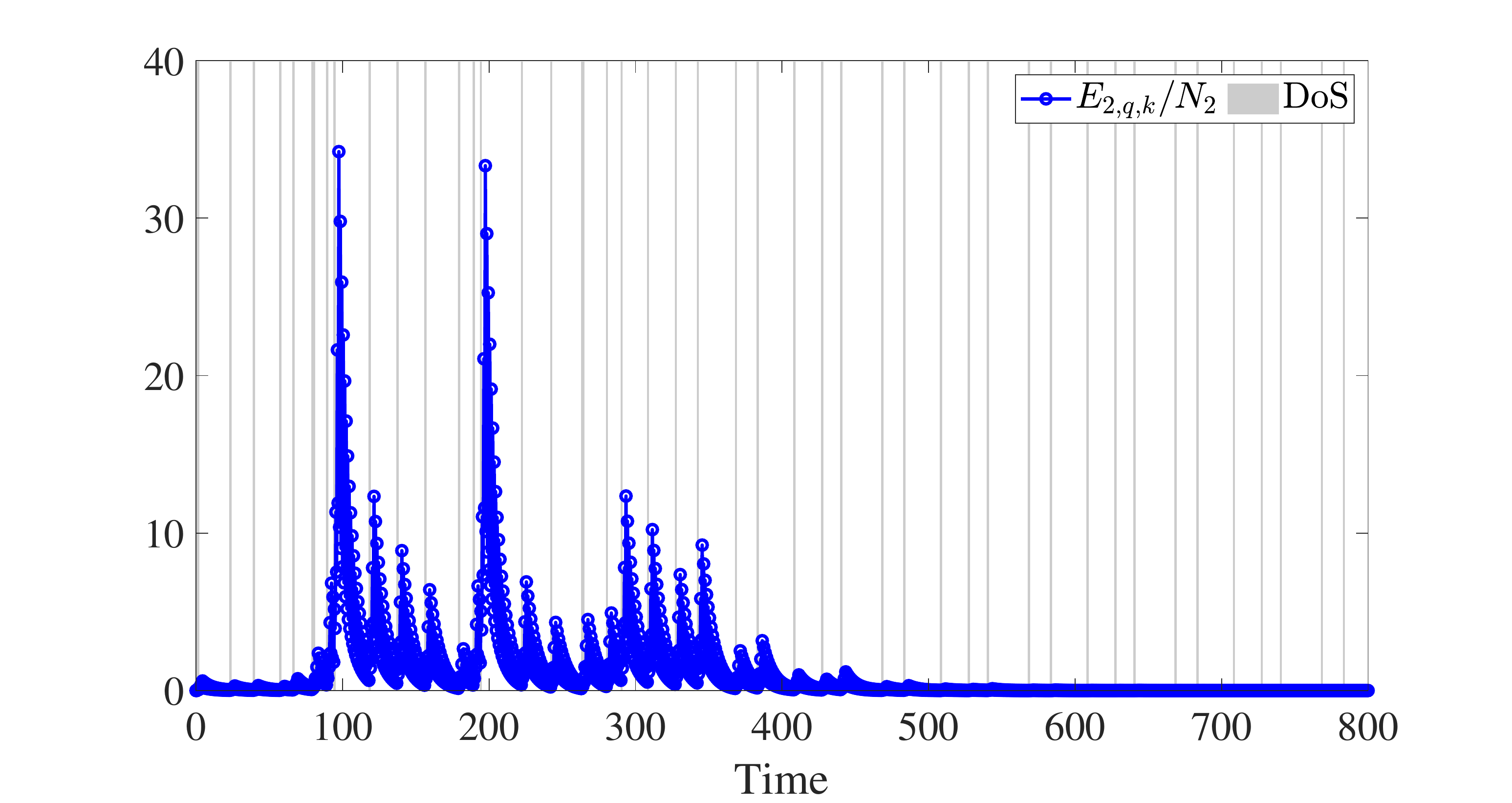}\\
		\caption{Normalized quantization range $E_{2,q,k}/N_2$ with controller (\ref{abdoscontroller}).}\label{figdoubledblqnormE2}
		\centering
	\end{figure}
	\begin{figure}
		\centering
		\includegraphics[width=9cm]{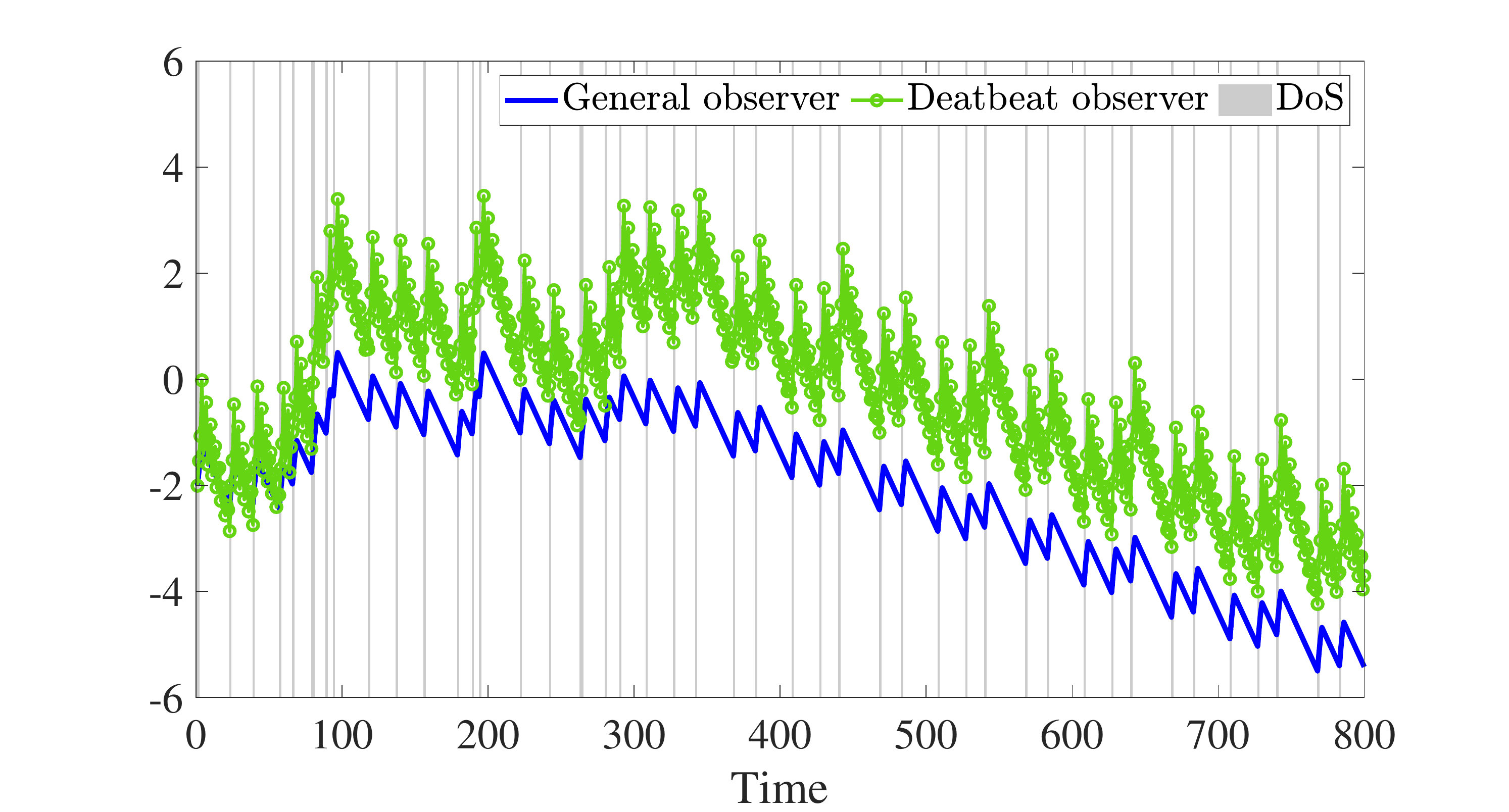}\\
		\caption{Normalized quantization ranges $E_{3,q}/N_3$ from using general observer gain and deadbeat observer gain in log space.}\label{figlogdblq}
		\centering
	\end{figure}
	\begin{figure}
		\centering
		\includegraphics[width=9cm]{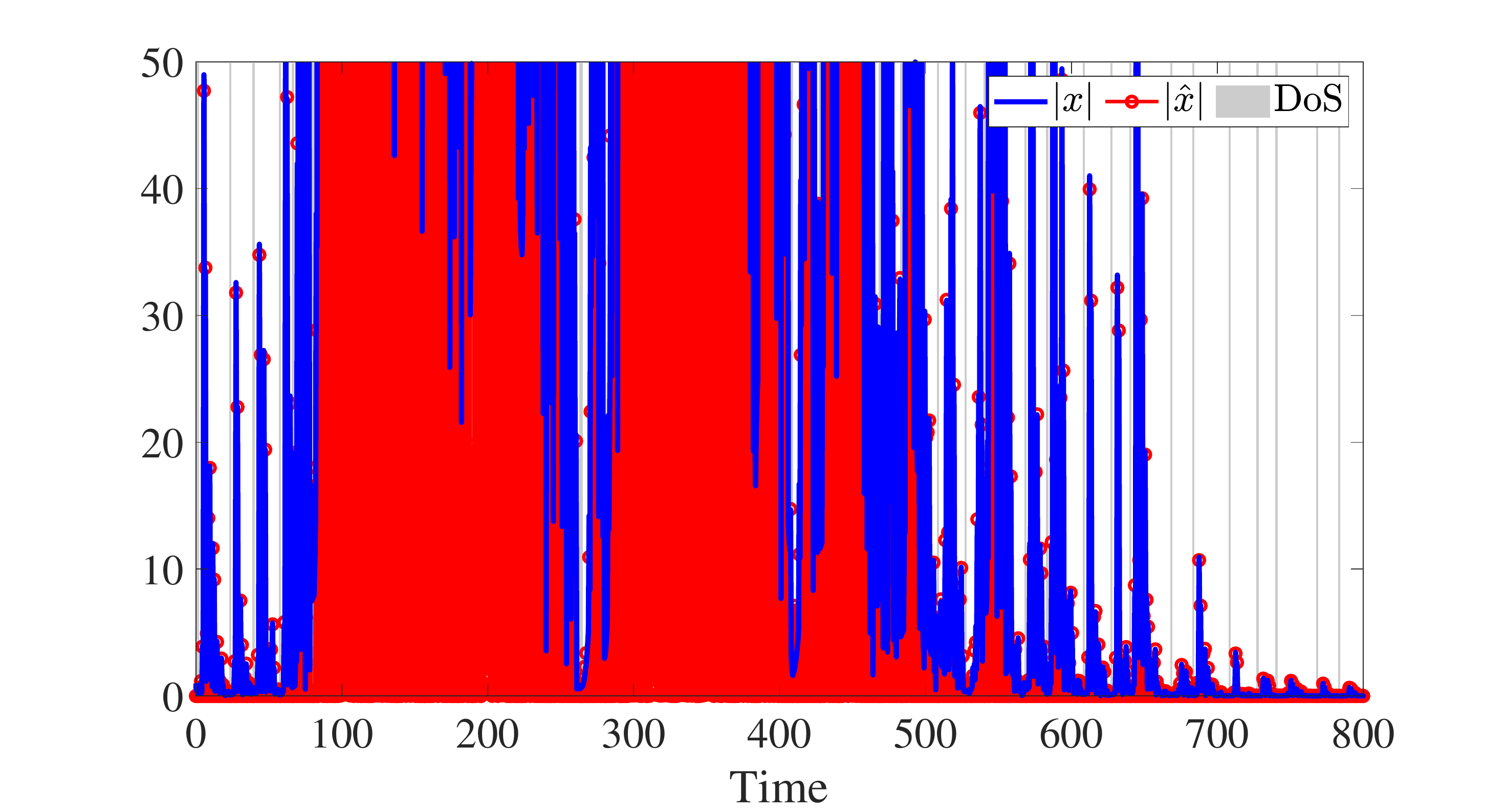}\\
		\caption{Maximum norm of state $x$ and its estimate $\hat{x}$ with controller (\ref{abdoscontroller}) and deadbeat observer gain.}\label{figdoubledbdbnormx}
		\centering
	\end{figure}
	\section{Numerical Example}
	\label{simulation}
	A linearized model of the unstable batch reactor in \cite{8880482} is given by $\dot{x}(t) = A x(t) + B u(t)$ and $y = C x(t)$, where
	\begin{align*}
	A :=
	\left[
	\begin{matrix}
	1.38 & -0.2077 & 6.715 & -5.676\\
	-0.5814 & -4.29 & 0 & 0.675\\
	1.067 & 4.273 & -6.654 & 5.893\\
	0.048 & 4.273 & -1.343 & -2.104
	\end{matrix}
	\right],\\
	B := \left[
	\begin{matrix}
	0 & 0\\
	5.679 & 0\\
	1.136 & -3.146\\
	1.136 & 0
	\end{matrix}
	\right],
	C :=
	\left[
	\begin{matrix}
	1 & 0 & 1 & -1\\
	0 & 1 & 0 & 0
	\end{matrix}
	\right].
	\end{align*}
	This system $(A, B, C)$ is observable and controllable with $\eta =\mu =  2$.
	Let the output transmission period $\Delta = 0.2$, so $\delta = \Delta/\eta = 0.1$.
	Choosing matrix $K$, such that (\ref{dbdb}) is met, i.e.,
	\begin{align*}
	& K :=
	\left[
	\begin{matrix}
	1.0106 & -1.5661 & 0.0385 & -4.0366\\
	8.1074 & -0.0347 & 4.3337 &- 3.6241
	\end{matrix}
	\right].
	\end{align*}
	Calculating the gain of the steady-state Kalman filter  
	\begin{align*}
	M :=
	\left[
	\begin{matrix}
	0.5534 & -0.0249\\
	-0.0287 & 0.0396\\
	0.1489 & 0.0892\\
	0.0810 & 0.0931
	\end{matrix}
	\right].
	\end{align*}
	
	We first present the time responses when both input and output channels suffer from the network phenomena.
	Applying Thm. \ref{2convergetheorem}, when both the quantization levels $N_2$ and $N_3$ go to infinity, the duration bound $1/\nu_d$ and the frequency bound $1/\nu_f$ of DoS attacks approach to the line $\frac{1}{\nu_d} \approx -0.5544\frac{1}{\nu_f} + 0.2707$.
	According to (\ref{2doscondition}), if
	$\frac{1}{\nu_d} < -2.0380\frac{1}{\nu_f} + 0.2269$,
	then the closed-loop system with encoding schemes (\ref{2E1q})-(\ref{2E3q}) is stabilized.
	Over a simulation horizon of $160$s ($800$ time-step), DoS attacks (the gray shades) are generated randomly with $\Phi_d = 47$ and $\Phi_f = 44$.
	Setting $\kappa_d = 3, \nu_d = 18, \kappa_f = 2, \nu_f = 19$, condition (\ref{2doscondition}) holds, i.e., $1/\nu_d = 0.056 < 0.119$.
	Figs. \ref{figdoubledblqnormx} and \ref{figdoubledblqnormE} illustrate the time response in this situation.
	Since the condition in Thm. \ref{2convergetheorem} is satisfied, the maximum norm of the state converges, and the bound $E_{3,q}$ exponentially decreases.
	Fig. \ref{figdoubledblqnormE} depicts that $E_{3,q}$ shares the same trend with $|y_q - Q_1(\hat{y}_{q})|$, and Fig. \ref{figdoubledblqnormE2} demonstrates the evolution of the quantization step size $E_{2,q,k}/N_2$, which jumps up and down within an output transmission period, and decreases in general.
	Difference between the trend of $E_{3,q}/N_3$ and $E_{2,q,k}/N_2$ lies in the property of $\Vert R \Vert$ and $\Vert \bar{R}\Vert$.
	Fig. \ref{figlogdblq} compares the quantization step size $E_{3,q}/N_3$ of a general observer gain (blue line), such that $R$ is schur stable, and the deadbeat observer gain (dot marked green line), namely $R^{\mu} = 0$.
	This panel illustrates that although $E_{3,q}$ responds faster under deadbeat observer, the large quantization step size results in large overshoot of the state; see Fig. \ref{figdoubledbdbnormx}, which confirms Rmk. \ref{2dbdbremark}.
	\begin{figure}
		\centering
		\includegraphics[width=9cm]{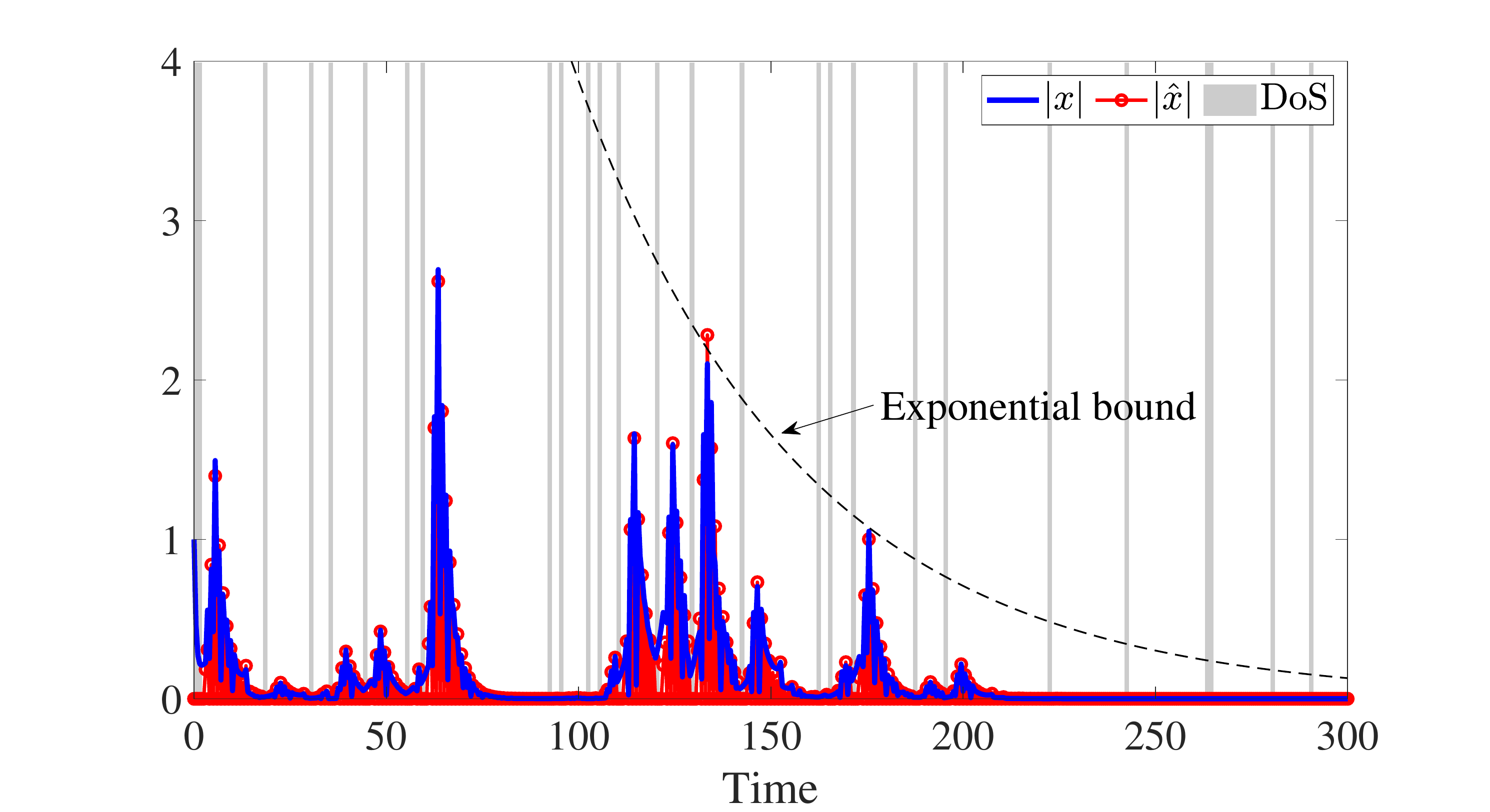}\\
		\caption{Maximum norm of state $x$ and its estimate $\hat{x}$ with controller (\ref{controller1}).}\label{figsiglenormx}
		\centering
	\end{figure}
	\begin{figure}
		\centering
		\includegraphics[width=9cm]{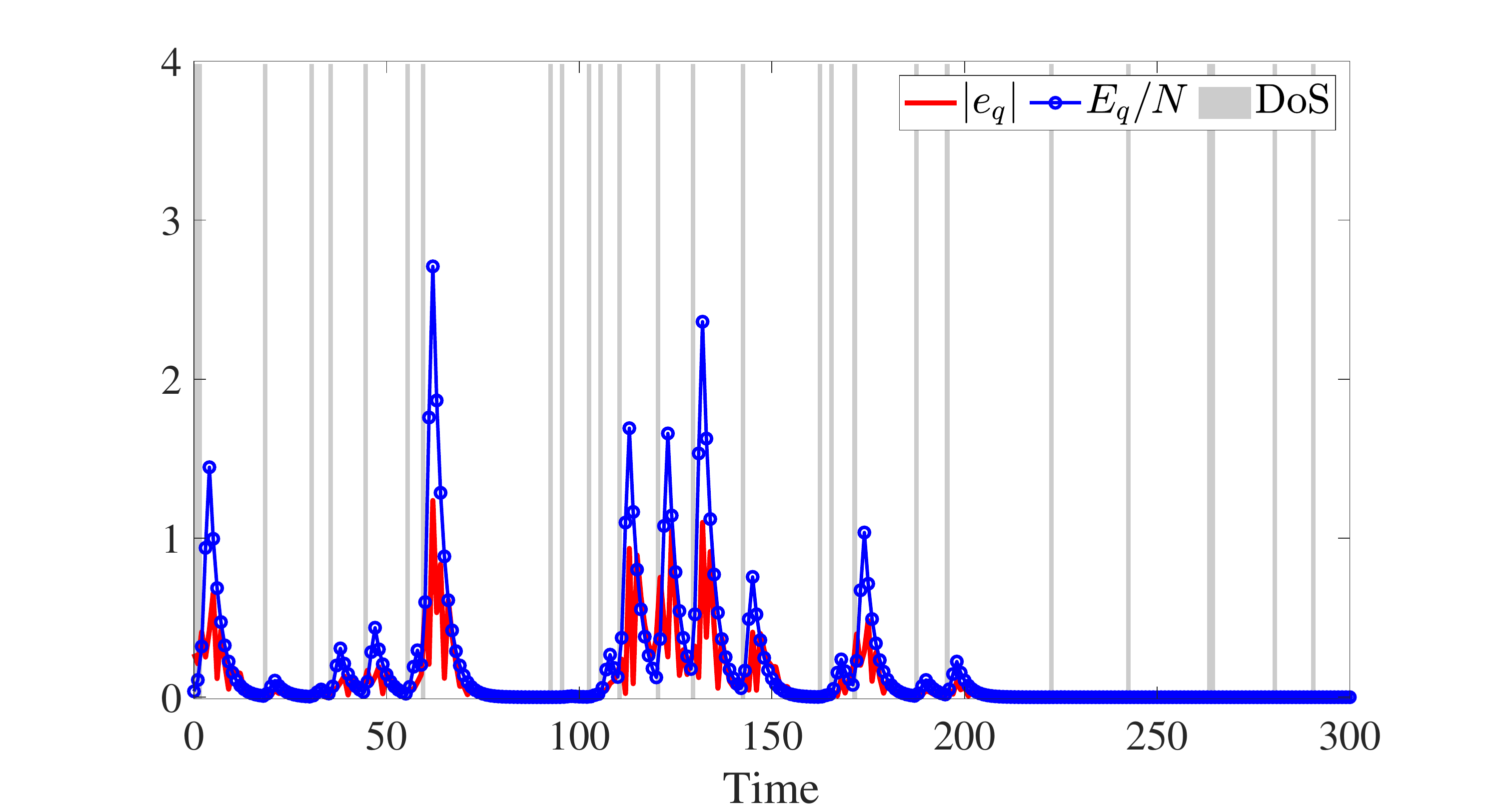}\\
		\caption{Relationship between normalized quantization range $E_q/N$ and actual error $|e_q|$ with controller (\ref{controller1}).}\label{figsigleerrorE}
		\centering
	\end{figure}
	\begin{figure}
		\centering
		\includegraphics[width=9cm]{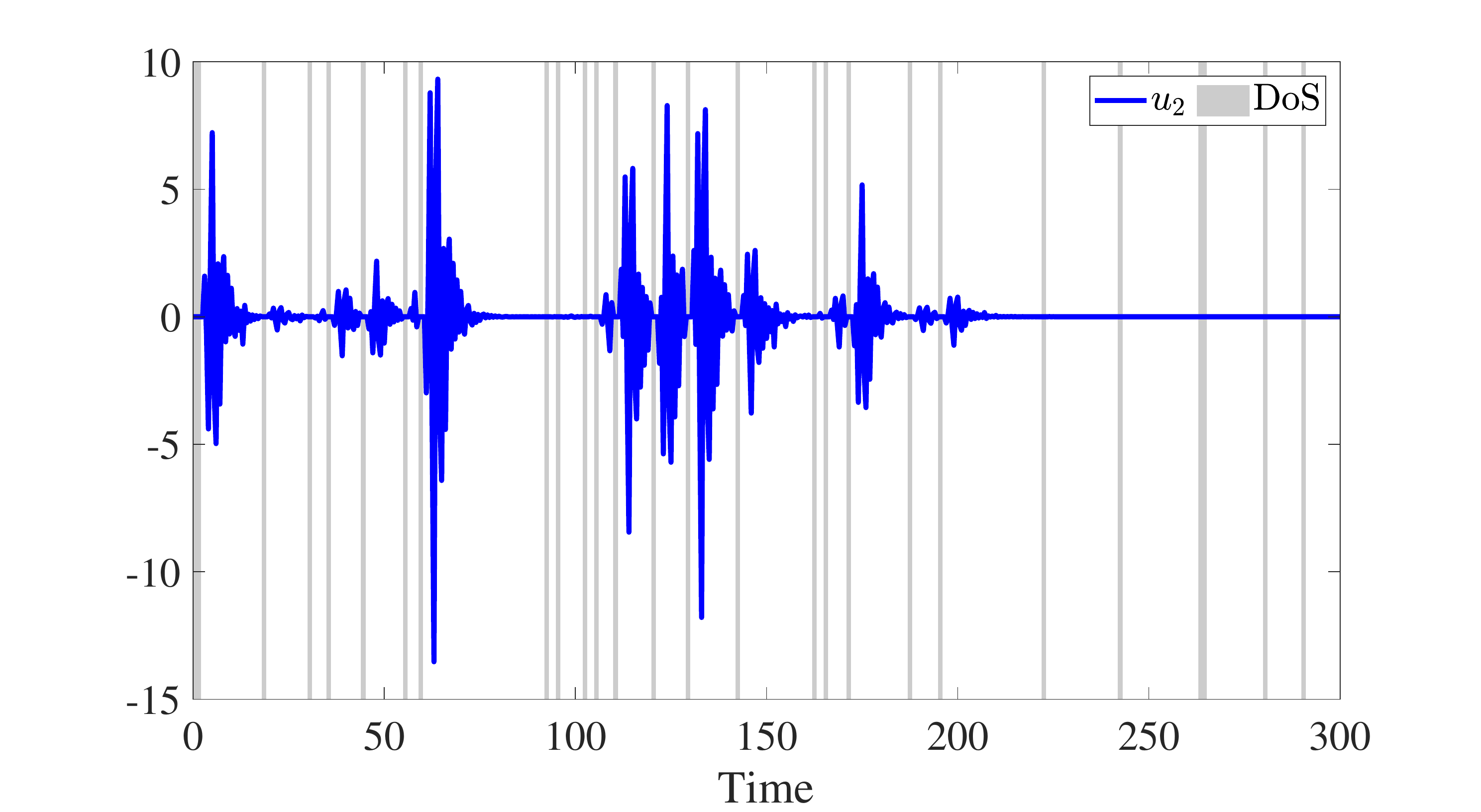}\\
		\caption{Input signal $u_{q,k}$ with controller (\ref{controller1}).}\label{figsigleu}
		\centering
	\end{figure}
	
	Next, consider network phenomena only at output channel.
	From (\ref{Ncondition}), the quantization levels satisfies $N > 6.957$, also since $N$ is even, we set $N = 100$.
	Over a simulation horizon of $60$s ($300$ time-step), generating DoS attacks randomly with $\Phi_d = 27$ and $\Phi_f = 25$.
	Setting $\kappa_d = 1, \nu_d = 11, \kappa_f = 1, \nu_f = 11$, so condition (\ref{1doscondition}) is met with $1/\nu_d = 0.01 < 0.198$, and convergence of the state is presented in Figs. \ref{figsiglenormx} and \ref{figsigleerrorE}.
	Further, Fig. \ref{figsigleu} shows that when a DoS attack happens, the control input is set to zero immediately, which verifies the effectiveness of our method.
	
	\section{Conclusions}\label{conclusion}
	This paper considered the problem of stabilizing networked control systems in the presence of DoS attacks and limited data rates. To overcome the network-induced challenges, a structure consisting of a deadbeat controller and a transmission protocol which are carefully co-designed based on the system controllability index, was proposed to address the network-induced challenges. 
	Specifically, when both input and output channels are subject to the network phenomena, it was shown that the proposed structure can decouple and thus allow for separate design of encoding schemes for the input, output, and estimated output signals. Furthermore, easy-to-check conditions were derived such that exponential stability of the closed-loop system under this structure is ensured. On the other hand, when only the output channel is subject to the network phenomena, the proposed structure was shown able to guarantee synchronization between the encoder and decoder under an ACK-free protocol.
	Finally, a numerical example was presented to verify the effectiveness of our approach as well as the correctness of our theory.
	Future developments will focus on generalizing the results to more general systems and controllers under ACK-free protocols.
	
	\bibliographystyle{IEEEtran}
	
	\bibliography{bible1}
	
	\begin{IEEEbiography}[{\includegraphics[width=1in,height=1.25in,clip,keepaspectratio]{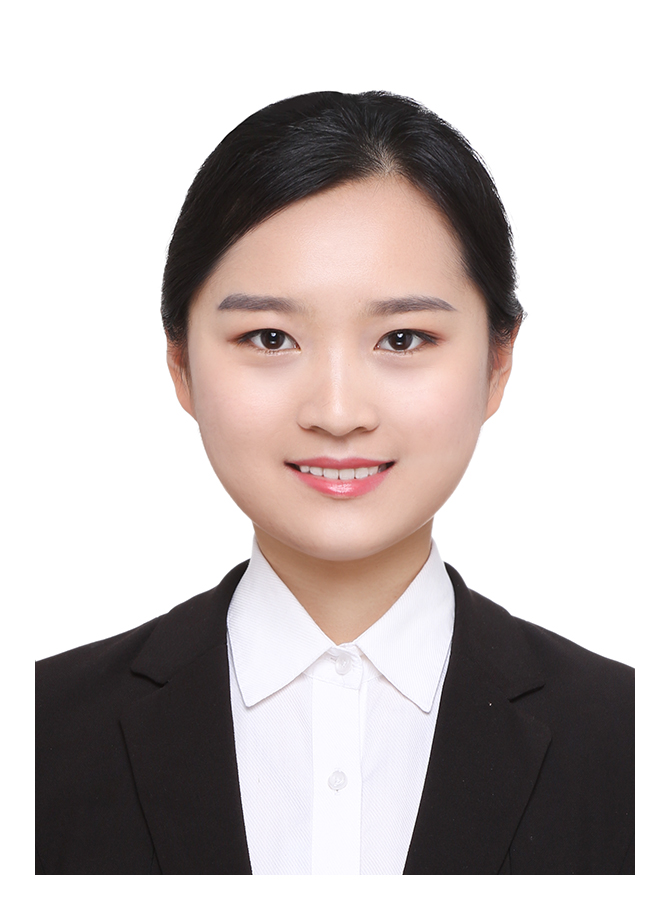}}]{Wenjie Liu} received the bachelor's degree in Automation from Tianjin University, Tianjin, China, in 2019.
		She is currently pursuing the Ph.D. degree in control science and engineering with the School of Automation, Beijing Institute of Technology, Beijing, China.		
		Her current research interests include cyber–physical systems and network control under communication constraints.
	\end{IEEEbiography}
	\begin{IEEEbiography}[{\includegraphics[width=1in,height=1.25in,clip,keepaspectratio]{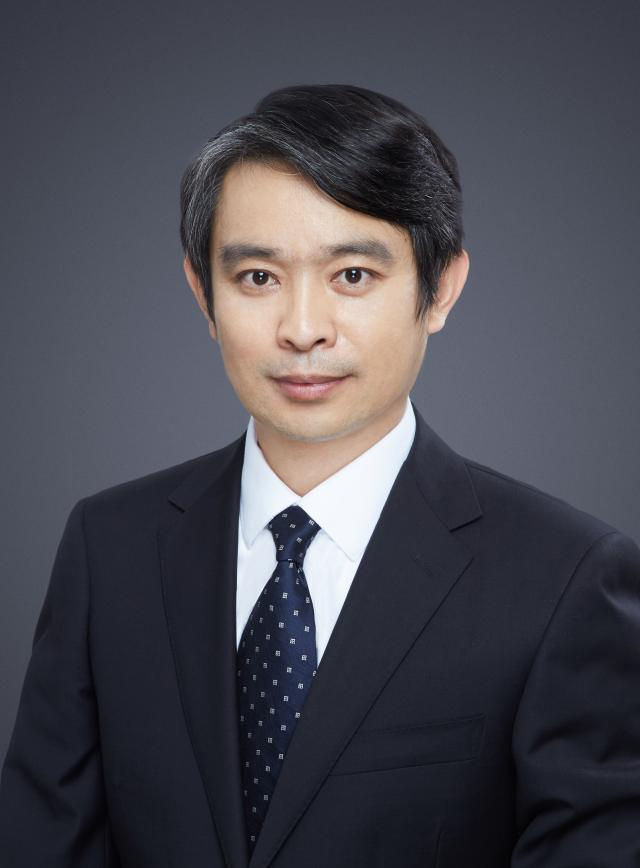}}]{Jian Sun} received the bachelor's degree from the Department of Automation and Electric Engineering, Jilin Institute of Technology, Changchun, China, in 2001, the master's degree from the Changchun Institute of Optics, Fine Mechanics and Physics, Chinese Academy of Sciences (CAS), Changchun, China, in 2004, and the Ph.D. degree from the Institute of Automation, CAS, Beijing, China, in 2007.
		
		He was a Research Fellow with the Faculty of Advanced Technology, University of Glamorgan, Pontypridd, U.K., from 2008 to 2009. He was a Post-Doctoral Research Fellow with the Beijing Institute of Technology, Beijing, from 2007 to 2010. In 2010, he joined the School of Automation, Beijing Institute of Technology, where he has been a Professor since 2013. His current research interests include networked control systems, time-delay systems, and security of cyber-physical systems.
		
		Dr. Sun is an Editorial Board Member of the IEEE Transactions on Systems, Man and Cybernetics: Systems, the Journal of Systems Science \& Complexity, and Acta. Automatica Sinica.
	\end{IEEEbiography}	
	\begin{IEEEbiography}[{\includegraphics[width=1in,height=1.5in,clip,keepaspectratio]{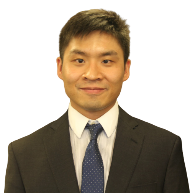}}] {Gang Wang} (M'18) received a B.Eng. degree in Automatic Control in 2011, and a Ph.D. degree in Control Science and Engineering in 2018, both from the Beijing Institute of Technology, Beijing, China. He also received a Ph.D. degree in Electrical and Computer Engineering from the University of Minnesota, Minneapolis, USA, in 2018, where he stayed as a postdoctoral researcher until July 2020. Since August 2020, he has been a professor with the School of Automation at the Beijing Institute of Technology. 
		
		His research interests focus on the areas of signal processing, control, and reinforcement learning with applications to cyber-physical systems and multi-agent systems. 
		He was the recipient of the Excellent Doctoral Dissertation Award from the Chinese Association of Automation in 2019, the Best Student Paper Award from the 2017 European Signal Processing Conference, and the Best Conference Paper at the 2019 IEEE Power \& Energy Society General Meeting. He is currently on the editorial board of Signal Processing.
		
	\end{IEEEbiography}

		\begin{IEEEbiography}[{\includegraphics[width=1in,height=1.25in,clip,keepaspectratio]{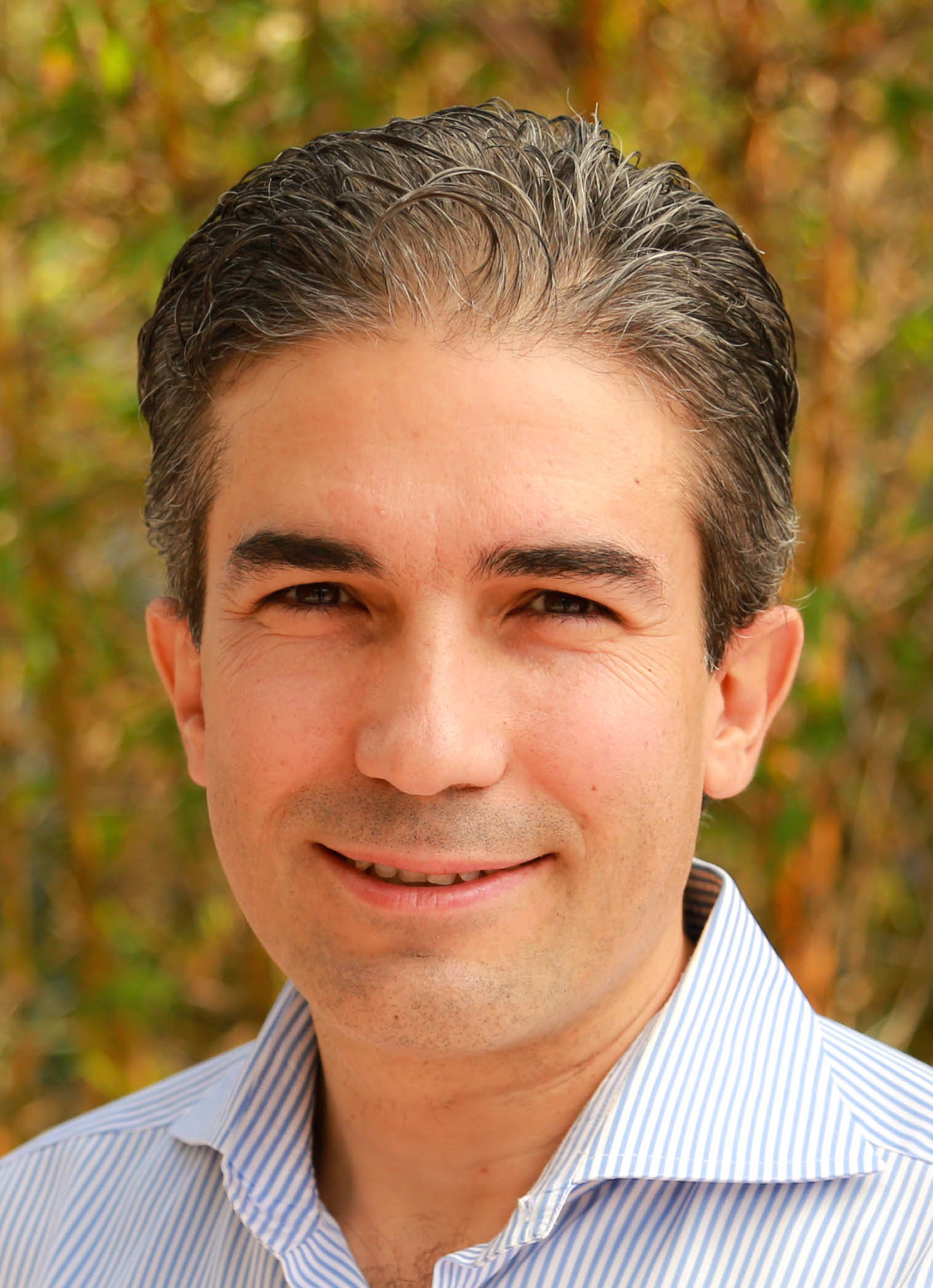}}]{Francesco Bullo} (IEEE S'95-M'99-SM'03-F'10) is a Professor with the Mechanical Engineering Department and the Center for Control, Dynamical Systems and Computation at the University of California, Santa Barbara. He was previously associated with the University of Padova, the California Institute of Technology, and the University of Illinois.
			 
			His research focuses on modeling, dynamics and control of multi-agent network systems, with applications to robotic coordination, power systems, distributed computing and social networks.
			Previous work includes contributions to geometric control, Lagrangian systems, vehicle routing, and motion planning. 
			He has published more than 300 papers in international journals, books, and refereed conferences. 
			He is the coauthor, with Andrew D. Lewis, of the book ``Geometric Control of Mechanical Systems'' (Springer, 2004, 0-387-22195-6), with Jorge Cortés and Sonia Martínez, of the book “Distributed Control of Robotic Networks” (Princeton, 2009, 978-0-691-14195-4), with Stephen L. Smith of the book “Lectures on Robotics Planning and Kinematics” (SIAM, 2019, under contract); and of the book “Lectures on Network Systems” (Kindle Direct Publishing, 2020, v1.4, 978-1986425643).
			He received best paper awards for his work in IEEE Control Systems, Automatica, SIAM Journal on Control and Optimization, IEEE Transactions on Circuits and Systems, and IEEE Transactions on Control of Network Systems. 
			
			He is a Fellow of IEEE, IFAC, and SIAM. He has served on the editorial boards of IEEE, SIAM, and ESAIM journals, and serves as 2018 IEEE CSS President.
			He is serving as Chair of the SIAM Activity Group on Control and Systems Theory, for 2020-2021.
	\end{IEEEbiography}
	
	\begin{IEEEbiography}[{\includegraphics[width=1in,height=1.25in,clip,keepaspectratio]{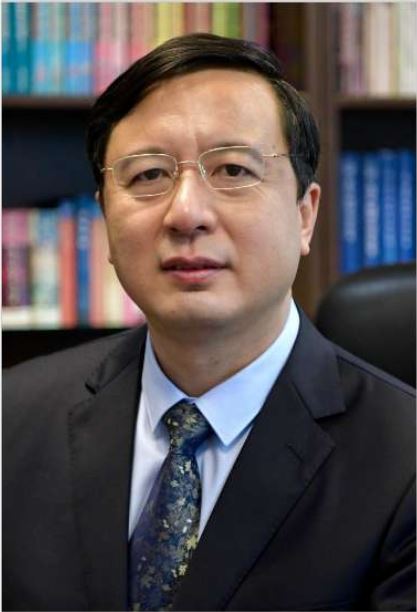}}]{Jie Chen} (F'19) received his B.Sc., M.Sc., and the Ph.D. degrees in control theory and control engineering from the Beijing Institute of Technology, Beijing, China, in 1986, 1996, and 2001, respectively. From 1989 to 1990, he was a visiting scholar at the California State University, Long Beach, California, USA. From 1996 to 1997, he was a research fellow in the School of Engineering at the University of Birmingham, Birmingham, UK. 
		
		He is a Professor with the School of Automation, Beijing Institute of Technology, where he serves as the Director of the Key Laboratory of Intelligent Control and Decision of Complex Systems. He also serves as the President of Tongji University, Shanghai, China. 
		His research interests include complex systems, multiagent systems, multiobjective optimization and decision, and constrained nonlinear control.
		
		Prof. Chen is currently the Editor-in-Chief of Unmanned Systems and
		the Journal of Systems Science and Complexity. He has served on the editorial boards
		of several journals, including the
		IEEE Transactions on Cybernetics, International Journal of Robust
		and Nonlinear Control, and Science China Information Sciences. He is a Fellow of IEEE, IFAC, and a member of the Chinese Academy of Engineering.
		 
		
%
		\end{IEEEbiography}
\end{document}